\newtheorem{theorem}{Theorem}
\newtheorem{lemma}{Lemma}
\title{Crystalline: Fast and Memory Efficient Wait-Free Reclamation}
\author{Ruslan Nikolaev, Binoy Ravindran}
\affil{rnikola@vt.edu, binoy@vt.edu\\Virginia Tech, USA}
\date{}
\begin{document}

\maketitle

\begin{abstract}
Historically, memory management based on lock-free reference counting was very inefficient, especially for read-dominated workloads. Thus, approaches such as epoch-based reclamation (EBR), hazard pointers (HP), or a combination thereof have received significant attention. EBR exhibits excellent performance but is blocking due to potentially unbounded memory usage. In contrast, HP are non-blocking and achieve good memory efficiency but are much slower. Moreover, HP are only lock-free in the general case. Recently, several new memory reclamation approaches such as WFE and Hyaline have been proposed. WFE achieves wait-freedom, but is less memory efficient and suffers from suboptimal performance in oversubscribed scenarios; Hyaline achieves higher performance and memory efficiency, but lacks wait-freedom.

We present a new wait-free memory reclamation scheme, Crystalline, that simultaneously addresses the challenges of high performance, high memory efficiency, and wait-freedom. Crystalline guarantees complete wait-freedom even when threads are dynamically recycled, asynchronously reclaims memory in the sense that any thread can reclaim memory retired by any other thread, and ensures (an almost) balanced reclamation workload across all threads. The latter two properties result in Crystalline's high performance and high memory efficiency. Simultaneously ensuring all three properties require overcoming unique challenges which we discuss in the paper.

Crystalline's implementation relies on specialized instructions which are widely available on commodity hardware such as x86-64 or ARM64. Our experimental evaluations show that Crystalline exhibits outstanding scalability and memory efficiency, and achieves superior throughput than typical reclamation schemes such as EBR as the number of threads grows.

\end{abstract}

\def\keywords{\vspace{.5em}
{\noindent{\textit{Keywords}:\,\relax%
}}}
\def\endkeywords{\par}

\makeatletter
\newcommand\footnoteref[1]{\protected@xdef\@thefnmark{\ref{#1}}\@footnotemark}
\makeatother

\keywords{memory reclamation, wait-free, reference counting, hazard pointers}

\definecolor{codegreen}{rgb}{0,0.6,0}
\definecolor{codegray}{rgb}{0.5,0.5,0.5}
\definecolor{codepurple}{rgb}{0.58,0,0.82}
\definecolor{backcolour}{rgb}{0.96,0.96,0.96}

\definecolor{darkpastelred}{rgb}{0.76, 0.23, 0.13}
\definecolor{darkolivegreen}{rgb}{0.33, 0.42, 0.18}

\lstdefinestyle{mystyle}{
frame=l,
framesep=3.2mm,
framexleftmargin=1.3mm,
rulecolor=\color{black},
fillcolor=\color{codegray},
    backgroundcolor=\color{white},   
    commentstyle=\color{darkolivegreen},
    keywordstyle=\color{darkpastelred},
    numberstyle=\tiny\bf\color{white},
    stringstyle=\color{codepurple},
    basicstyle=\fontfamily{LinuxLibertineMonoT-TLF}\fontsize{5.9}{8}\selectfont,
    breakatwhitespace=false,         
    breaklines=true,                 
    captionpos=b,                    
    keepspaces=true,                 
    numbers=left,                    
    numbersep=5pt,                  
    showspaces=false,                
    showstringspaces=false,
    showtabs=false,                  
    tabsize=2
}
 
\lstset{style=mystyle}

\lstset{emph={%
    done, changed, produced, int64, uint64, thread\_local, invptr, nullptr%
    },emphstyle={\color{darkpastelred}}%
}%

\section{Introduction}

Many- and multi-core architectures have become ubiquitous and are used even for the most basic systems today. Exploiting parallelism on such architectures often requires scalable non-blocking data structures as opposed to more traditional, mutual exclusion lock-based designs. Since non-blocking data structures do not use simple mutual exclusion, their memory management becomes highly challenging: a concurrent thread may hold an obsolete pointer to an object which is about to be freed by another thread. Responding to this serious challenge, \emph{safe memory reclamation} (SMR) schemes for unmanaged C/C++ code have been proposed in literature (e.g.,~\cite{Cohen:2018:DSD:3288538.3276513,OrcGC,epoch1,HPPaper,WFE,hyalineFULL,HEPaper,IBRPaper}). However, they typically involve major trade-offs, such as trading off memory efficiency for high throughput (or vice versa).

Exacerbating the problem, only a few schemes~\cite{Cohen:2018:DSD:3288538.3276513,OrcGC,HPPaper,WFE,hyalineFULL,HEPaper} are truly \emph{non-blocking} and with bounded memory usage, i.e., when a suspension of one thread has no adverse impact on progress in other threads. Furthermore,
only one scheme~\cite{WFE} implements \emph{wait-freedom} for a general case.
Wait-freedom, the strongest non-blocking progress guarantee, is fast becoming a non-negotiable property, especially for tail latency-sensitive applications that are increasingly deployed, for example, in cloud and mission-critical settings. Techniques to create fast, wait-free data structures such as Kogan and Petrank's fast-path-slow-path methodology~\cite{Kogan:2012:MCF:2145816.2145835} typically implicitly assume wait-free memory reclamation.

Reclamation schemes with high \textit{performance} and memory \textit{efficiency}\footnote{A lower theoretical memory usage bound does not always imply better practical efficiency, see Section~\ref{sec:evalution}.} are typically preferred. The significance of balancing the reclamation workload -- the task of reclaiming deleted memory objects -- across all threads have not received adequate attention in the literature. Consider the common scenario of read-dominated workloads -- i.e., read operations dominate, but data can still be modified. If the reclamation workload is unbalanced for this case, as in most existing reclamation schemes~\cite{epoch1,HPPaper,HEPaper,IBRPaper}, 
most threads are not actively reclaiming memory, which can cause memory waste. As a side-effect, throughput can also decrease due to the consequent increased pressure on the memory management system.
In oversubscribed scenarios where there are more threads than cores available, this can have a cascading effect: threads that are supposed to reclaim memory are preempted, further degrading performance. 

Vast majority of reclamation schemes (e.g.,~\cite{HPPaper,HEPaper,IBRPaper}) are inherently \emph{synchronous}, i.e., they need to periodically examine which objects marked for deletion can be safely freed. In contrast, reference counting~\cite{refcount4,refcount1,refcount3,refcount2} is
\emph{asynchronous}: a thread with the last reference frees an object. Unfortunately, reference counting is often impractical due to very high overheads when accessing objects. Recent reclamation schemes such as Hyaline~\cite{Hyaline,hyalineFULL} take a somewhat different approach: reference counters are only used when objects are retired. These schemes are still \emph{asynchronous} and exhibit high performance. However, they are still blocking generally speaking (even ``robust'' versions) since memory usage is unbounded if starving threads are not properly managed.

We propose the Crystalline-W scheme, which achieves all of the aforementioned desirable properties of memory reclamation in a single algorithm: wait-freedom, asynchronous reclamation, and balanced reclamation workload. Crystalline-W extends a lock-free algorithm, Crystalline-L, which we also present in the paper (Section~\ref{sec:dc}). Crystalline-L, in turn, is based on Hyaline-1S~\cite{Hyaline,hyalineFULL}, a prior scheme. Making Crystalline wait-free involves overcoming a set of unique challenges, caused in part due to asynchronous memory reclamation (Section~\ref{sec:reclamation}).

We implement Crystalline-L and Crystalline-W and evaluate them on commodity x86-64 hardware (Section~\ref{sec:evalution}).
Both schemes achieve outstanding performance as well as excellent memory efficiency over a broader range of workloads (even outperforming Hyaline schemes). The results are even more outstanding for read-dominated workloads.

\section{Background}
\label{sec:background}

\textbf{Progress.}
Following the traditional \emph{non-blocking} categorization, we
call an algorithm \emph{lock-free} if one or more threads complete an operation after a finite number of steps even in the presence of contending threads, and \emph{wait-free} -- if \emph{all} threads complete their operations after a
finite number of steps. Unsurprisingly, wait-freedom is typically harder to
achieve due to stronger progress properties.
Non-blocking progress should also be considered from a memory usage perspective, which must be bounded~\cite{HPPaper,HEPaper,pedroHEFULL}.
Otherwise, when exhausting memory, no thread can make progress. Such algorithms are \emph{blocking} even if they lack locks.

\textbf{Challenges.}
One of the key challenges of non-blocking memory reclamation
is to guarantee that memory usage is bounded.
EBR~\cite{epoch1} is a widely known scheme which is relatively easy to use, 
but is blocking due to potentially unbounded memory usage. Hazard
pointers (HP)~\cite{HPPaper}, hazard eras (HE)~\cite{pedroHEFULL}, or similar schemes are non-blocking, but they are only
lock-free due to a potentially unbounded loop in a method that safely retrieves pointers. Furthermore, other schemes, such as interval-based reclamation (IBR)~\cite{IBRPaper} and Hyaline-1S~\cite{Hyaline,hyalineFULL}, are not
lock-free in general since underlying data structures need to restart operations to guarantee bounded memory usage for starving threads. Such schemes still protect against threads that are stalled indefinitely and are known
as \emph{robust}.

\begin{table}[htbp]
		\caption{Comparison of Crystalline with existing reclamation schemes.}
\begin{center}
\resizebox{.93\textwidth}{!}{
\begin{tabular}{l l c l c c l}
    \toprule
			\textbf{Scheme} & \textbf{Type} & \textbf{Robust} & \textbf{Progress} & \textbf{Snapshot-Free} & \textbf{Overhead} \\
    \midrule
			EBR & Synchronous & No & Blocking & Yes & 1 word\footnotemark \\
			IBR & Synchronous & Yes & Blocking & No & 3 words \\
			HP & Synchronous & Yes & Lock-Free & No & 1 word\footnoteref{notetab} \\
			HE & Synchronous & Yes & Lock-Free & No & 3 words \\
			WFE & Synchronous & Yes & Wait-Free & No & 3 words \\
			Hyaline-1 & Asynchronous & No & Blocking & Yes & 3 words \\
			Hyaline-1S & Asynchronous & Yes & Blocking & Yes & 3 words \\
	\midrule
			Crystalline-L & Asynchronous & Yes & Lock-Free & Yes & 3 words \\
			Crystalline-W & Asynchronous & Yes & Wait-Free & Yes & 3 words \\
    \bottomrule
\end{tabular}
}
\end{center}
\label{tbl:comparsmr}
\end{table}

\footnotetext{\label{notetab}HP and EBR reserve 1 word for a local limbo
list pointer. This overhead can be eliminated altogether by allocating an
intermediate container object when retiring, but that causes undesirable circular
allocator dependency (avoided in Section~\ref{sec:evalution}). In the same vein, other schemes can reduce the overhead, e.g., IBR, HE, Hyaline-1S, and Crystalline-L's overhead is only 1 word
with container objects. In practice, overheads are often irrelevant due to objects often being cache-line sized to avoid false sharing, causing inevitable memory waste. (Though some exceptions also do exist.)}

\textbf{Snapshot-Freedom.} Many schemes (HP, IBR, HE, WFE) require
taking a \textit{snapshot} of global state (all hazard pointers or eras) to
avoid repeated access of these global variables when checking if
not-yet-freed blocks can be safely deleted. These schemes typically need to
examine fundamentally the \emph{same} state many times per each \emph{iteration}, which results in expensive cache
misses (due to contention) without taking local snapshots.\footnote{Snapshot-free schemes can still examine global state several times but only in different iterations that are sufficiently separated in time, i.e., taking snapshots is not helpful due to a global state divergence. Put differently, this simply means that an efficient implementation is feasible without taking snapshots.}

Taking snapshots results in additional $O(n^2)$ memory usage, which is
substantial as the number of threads, $n$, grows.
EBR, Hyaline-1, and Hyaline-1S do not need snapshots. Note that while IBR, HE, and
WFE (which all extend EBR) trade this quality for robustness,
Hyaline-1S is still snapshot-free, which makes it a great candidate for
further extensions.

\textbf{Comparison.}
Table~\ref{tbl:comparsmr} compares several existing reclamation schemes
against the proposed Crystalline schemes using relevant criteria. Crystalline schemes are \emph{asynchronous}, i.e., a thread that has
the last reference to a memory object ends up freeing it. Since threads that end up freeing memory are more or less arbitrary, the reclamation workload of these schemes is well \emph{balanced} across threads. Balanced reclamation is especially important for read-dominated workloads, wherein most threads are not actively modifying data structures, and enables faster memory reclamation.
Here we implicitly assume that all threads can be treated equally, but this is typical for wait-free algorithms which use helping.

Crystalline-L/-W reserve 3 words in each memory object for
a header. This is analogous to other lock- and wait-free schemes such as HE
and WFE. EBR and HP are more modest but EBR is blocking while HP
is generally much slower.

\textbf{Atomic Instructions.}
\label{sec:atomics}
CAS (compare-and-swap) is used universally by most lock- and wait-free algorithms. FAA (fetch-and-add) and SWAP are two specialized instructions that are often implemented in hardware (e.g., x86-64 and ARM64) and are leveraged by some non-blocking
algorithms~\cite{10.1145/1989493.1989549,Morrison:2013:FCQ:2442516.2442527,WFE,ChaoranWFQ}.
The execution time of hardware-implemented FAA and SWAP is typically bounded, 
allowing them to be used directly in wait-free algorithms.

WCAS (wide CAS)\footnote{Not to be confused with double-word CAS, which updates two \textit{distinct} words and is not widely available.}
is another specialized instruction, which updates two \textit{contiguous}
words, and is available on commodity x86-64 and ARM64 architectures. Typically, it is used
to attach a monotonically increasing \textit{tag} to a pointer to prevent the
ABA problem~\cite{Herlihy:2008:AMP:1734069}. Crystalline-W uses WCAS for a somewhat similar purpose -- tags
identify slow-path cycles and prevent spurious updates.

Note that Crystalline-L does not need specialized instructions.
WFE~\cite{WFE}, the only existent wait-free scheme, requires
WCAS and FAA instructions for wait-free
progress guarantees. Crystalline-W additionally requires the hardware to support SWAP. This is not burdensome in practice as all current architectures that implement WCAS and FAA also implement SWAP. It is reasonable to assume that the same trend will continue for the foreseeable future.

\begin{figure}[ht]
\hspace{4mm}
\begin{subfigure}{.47\columnwidth}
\lstset{language=C++}
\begin{lstlisting}[]
struct StkNode : Node {
	StkNode* next;
	Object*  obj;
};
StkNode* stack_top = nullptr;

void push(Object* obj) {
	StkNode* node = new StkNode;
	node->obj = obj;
	do // Replace the stack top
		node->next = stack_top;
	while (!CAS(&stack_top, node->next, node));
}
\end{lstlisting}
\end{subfigure}
\hspace{2mm}
\begin{subfigure}{.47\columnwidth}
\lstset{language=C++}
\begin{lstlisting}[firstnumber=14]
Object* pop() {
	Object* obj = nullptr;
	while (true) {
		// Fetch the stack top
		StkNode* node = stack_top;
		if (!node) break;
		if (CAS(&stack_top, node, node->next)) {
			obj = node->obj;
			delete node;
			break;
	}	}
	return obj;
}
\end{lstlisting}
\end{subfigure}
\caption{Lock-free stack (without proper reclamation).}
\label{fig:api}
\end{figure}

\textbf{Reclamation Model.}
All discussed schemes expect that the SMR API is called explicitly. Each thread
retains a \emph{reservation}, globally-accessible per-thread state related
to SMR.
Memory blocks, which incorporate all necessary SMR headers, are allocated via standard OS means. After memory blocks appear in a data structure, they can only be reclaimed in two steps. First, after deleting a pointer from the data structure, a memory block is \emph{retired}. When the block finally becomes unreachable by any other concurrent thread, it is returned to the OS.

\subsection{Usage Example}

Throughout our paper we discuss reclamation schemes using a lock-free stack~\cite{Treiber:tech86} shown in Figure~\ref{fig:api}. The stack inserts or deletes
arbitrary objects which are stored in a list consisting of \verb|StkNode|
elements. Each element must incorporate a memory reclamation \verb|Node|, which
is an opaque structure that must be
attached to every reclamation unit.

\subsection{Hyaline reclamation schemes}

Four schemes are presented in~\cite{Hyaline,hyalineFULL}. Hyaline
and Hyaline-1 are not robust just like EBR. Two ``robust'' schemes, Hyaline-S and Hyaline-1S, extend two previous algorithms to detect stalled threads with the help of special progress stamps known as \textit{eras}, as discussed below.

The difference between the Hyaline schemes lies in whether threads
can share reservations (Hyaline, Hyaline-S) or need their own per-thread (Hyaline-1, Hyaline-1S) reservations. Hyaline-1 and Hyaline-1S are closer to typical schemes (e.g., EBR, IBR,
HP, and HE) in the way reservations are managed.
In practice, sharing the same reservation is not that useful for
wait-free data structures. Typical wait-free algorithms that
rely on fast-path-slow-path methodologies already maintain per-thread states,
and thus maintaining per-thread reservations do not incur any additional
burden. In this paper, we focus only on Hyaline-1 and Hyaline-1S to increase
the tractability of the problem of wait-free reclamation.

\subsubsection{Hyaline-1}

Hyaline-1's~\cite{Hyaline,hyalineFULL} API is very similar to that of classical EBR~\cite{epoch1} and includes:
\begin{itemize}
\item \verb|activate()|, \verb|clear():| methods used by each thread to enclose
a data structure operation; local pointers to shared memory are only
valid within each such enclosure.
In Figure~\ref{fig:api}, \verb|activate()| must be inserted before \textbf{L8,~L15} and \verb|clear()| after \textbf{L12,~L24}.
\item \verb|retire(blk):| a method which indicates that block
\verb|blk| is deleted and will not be accessed subsequently. Already running threads can still access \verb|blk|; when \verb|blk| can
be safely freed, it will be de-allocated.
In Figure~\ref{fig:api}, \verb|retire()| must be used in lieu of \textbf{L22}.
\end{itemize}

\begin{figure}[ht]
\hspace{2mm}
\begin{minipage}{.48\textwidth}
\lstset{language=C++}
\begin{lstlisting}
struct Node { // Each node takes 3 memory words
	union {
		uint64       refc;       // REFS: Refer. counter
		Node*        bnext;      // SLOT: Next node
	};
	union { // SLOT nodes reuse the space after retire
		uint64       birth;      // Node's birth era
		Reservation* slot;       // SLOT: While retiring
		Node*        next;       // SLOT: After retiring
	};
	Node*  blink;          // REFS: First SLOT node,
};                       // SLOT: Pointer to REFS
struct Batch { // Accumulates nodes before retiring
	Node*  first, refs;  // Init: nullptr, nullptr
	int    counter;      // Init: 0
};
thread_local Batch batch;    // Per-thread batches
\end{lstlisting}
\caption{Hyaline's data structures.}
\label{fig:algnode}
\end{minipage}%
\begin{minipage}{.52\textwidth}
\includegraphics[width=\textwidth]{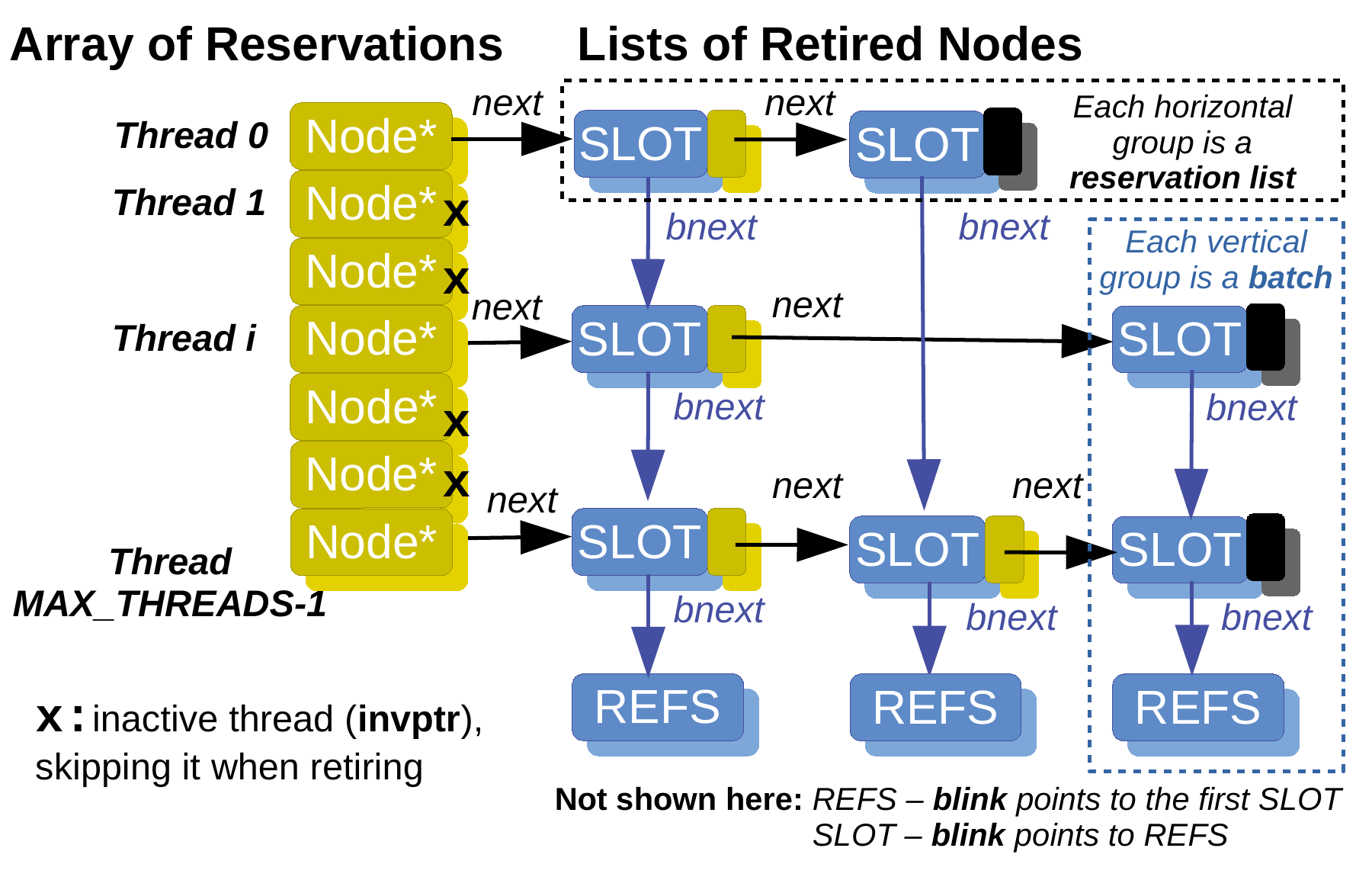}
\caption{An example with retired batches.}
\label{fig:ds}
\end{minipage}%
\end{figure}

Retired memory blocks, \emph{nodes}, are first collected in a thread-local list,
known as a \textit{batch}, before completing their actual retirement.
The purpose of batches is twofold: (i) eliminate contention due to frequent
\verb|retire()| calls by retiring multiple nodes together, and (ii) reserve
a pointer in every node for a separate per-thread linked list when retiring the batch. Due to (ii), the number of nodes in each batch should
at least equal to the number of threads, MAX\_THREADS. Furthermore, an additional node
in the batch is needed to keep a reference counter, which reflects the number
of active threads when the batch is retired. When active threads
complete their operations by calling \verb|clear()|, they traverse their lists
and decrement this reference counter.
Once the reference counter reaches zero, the entire batch is reclaimed.
Thus, batches should at least contain MAX\_THREADS+1 nodes.

Figure~\ref{fig:algnode} shows \verb|Node|'s (header) structure
which is placed in every memory block. \emph{Union}
definitions emphasize how three words are shared for different fields.
For convenience, we
categorize retired nodes into two types: REFS and SLOT. Obviously, no such
distinction exists when nodes are allocated. However, as nodes are retired
and attached to a thread-local batch, we differentiate them.
The very first retired node in the batch is denoted as REFS, and its purpose
is to keep batch's reference counter. All subsequent nodes are denoted as SLOT,
and each of them keeps one list pointer for one reservation. SLOT nodes are linked
together, and each of them has a reference to the REFS node. The REFS node
has a reference to the first SLOT node.
Figure~\ref{fig:ds} shows how retired batches are related
to reservations.

Figure~\ref{fig:algactivate} presents the pseudocode for Hyaline-1. In the code, we use TID to denote the current thread ID.
Each reservation contains its \textit{list} of retired nodes.
The original Hyaline-1's presentation~\cite{Hyaline,hyalineFULL} used a binary reference counter (0 or 1) associated with each reservation's list pointer, which would indicate whether the
corresponding thread is active. Figure~\ref{fig:algactivate} indicates a ``cleared'' reservation by using a special \verb|invptr| value for reservation's list pointer, which is equivalent to that but does not ``steal'' one bit from a pointer. We use the \verb|(void *) -1| value, which is reserved by OSes, e.g., in the \verb|mmap(2)| system call, for errors and will never appear in data structures.
Any legit pointer indicates an ``active'' reservation. Note that \verb|nullptr|
is still legit and simply indicates an empty list.

\begin{figure}[ht]
\hspace{3mm}
\begin{subfigure}{.48\columnwidth}
\lstset{language=C++}
\begin{lstlisting}[]
struct Reservation {
	Node* list; // Init: invptr
};
Reservation rsrv[MAX_THREADS];

void activate() {
	rsrv[TID].list = nullptr; // Active: empty list
}

void clear() {
	Node* p = SWAP(&rsrv[TID].list, invptr);
	traverse(p); // p!=invptr, activate() was called
}

void traverse(Node* next) {
	while (next != nullptr) {
		Node* curr = next;
		next = curr->next;
		Node* refs = curr->blink;
		if (FAA(&refs->refc, -1) == 1) free_batch(refs);
}	}

void free_batch(Node* refs) {
	// RNODE is dummy for Hyaline-1, Hyaline-1S,
	Node* n = RNODE(refs->blink); // and Crystalline-L
	do {
		Node* obj = n;
		// refc and bnext overlap and are 0
		// (nullptr) for the last REFS node
		n = n->bnext;
		free(obj);
}	} while (n != nullptr);
\end{lstlisting}
\end{subfigure}
\hspace{2mm}
\begin{subfigure}{.48\columnwidth}
\lstset{language=C++}
\begin{lstlisting}[firstnumber=33]
// This addend protects from premature reclamation
const uint64 REFC_PROTECT = 1 << 63;

void retire(Node* node) {
	if (!batch.first) { // the REFS node
		batch.refs = node;
		node->refc = REFC_PROTECT;
	} else { // SLOT nodes
		node->blink = batch.refs; // points to REFS
		node->bnext = batch.first;
	}
	batch.first = node;
	// Must have MAX_THREADS+1 nodes to insert to
	// MAX_THREADS lists, exit if do not have enough
	if (batch.counter++ < MAX_THREADS) return;
	// blink of REFS points to the 1st SLOT node
	batch.refs->blink = batch.first;
	Node* curr = batch.first;
	int64 cnt = -REFC_PROTECT;
	for (int i = 0; i < MAX_THREADS; i++) {
		while (true) {
			Node* prev = rsrv[i].list;
			if (prev == invptr) break;
			curr->next = prev;
			if (CAS(&rsrv[i].list, prev, curr)) {
				cnt++;
				break;
		}	}
		curr = curr->bnext;
	}
	if (FAA(&batch.refs->refc, cnt) == -cnt) // Finish
		free_batch(batch.refs); // retiring: change refc
	batch.first = nullptr; batch.counter = 0;
}
\end{lstlisting}
\end{subfigure}
\vspace{-10pt}
\caption{The Hyaline-1 reclamation scheme.}
\label{fig:algactivate}
\label{fig:algretire}
\end{figure}

\subsubsection{Hyaline-1S}
If one thread stalls while using Hyaline-1, new batches will continue to be
added to the list that corresponds to the stalled thread.
Thus, the reference counter of every added batch will account for
that stalled list, preventing its reclamation.
HE~\cite{HEPaper} and IBR~\cite{IBRPaper} proposed to use
\textit{birth} and \textit{retire} eras that determine intervals for
memory block lifetimes. Hyaline-1S also adopted \textit{birth}
eras (but without \textit{retire} eras) to guard against
stalled threads.

Hyaline-1S's~\cite{Hyaline,hyalineFULL} API is very similar to that of IBR and extends Hyaline-1 with:
\begin{itemize}
\item \verb|protect(ptr):| method used to safely retrieve pointers that
are about to be dereferenced by creating a local copy. In Figure~\ref{fig:api},
we need to wrap \verb|stack_top| with this method in \textbf{L18} when assigning it to \verb|node|. 
\item \verb|alloc_node():| a method which wraps \verb|malloc()| to initialize
an object with its \emph{birth} era. In Figure~\ref{fig:api},
it must be called in lieu of \textbf{L8}.
\end{itemize}

All \emph{valid} pointers (i.e., not marked anyhow for deletion) retrieved between \verb|activate()| and \verb|clear()| can be safely accessed. This semantics, common in other robust schemes such
as hazard pointers or IBR, is different from that of Hyaline-1 and EBR.
Care must be taken when accessing pointers, specifically when traversing
logically ``deleted'' nodes as in Harris' linked-list~\cite{HarrisList}, which
needs to be modified to promptly unlink nodes as discussed in~\cite{HPPaper}.

Figure~\ref{fig:algrob} shows Hyaline-1S's changes to Hyaline-1's corresponding methods.
Each reservation additionally adds a 64-bit active \textit{era} that was last observed by a respective thread.
The eras are assumed to never overflow in practice.
When nodes are allocated, \verb|alloc_node()| initializes their birth eras
with the global era clock value. When retrieving pointers, threads call
\verb|protect()| to update reservation's era value.

\verb|retire()| calculates batch's minimum birth era,
which is used subsequently when the entire batch is retired.
Unlike in IBR or HE, the \textit{birth} era field does not need to survive the
\verb|retire()| call.
REFS reuses this field to keep the minimum birth era, while SLOT reuses the space to keep a reservation list pointer.

\begin{figure}[ht]
\hspace{3mm}
\begin{subfigure}{.45\columnwidth}
\lstset{language=C++}
\begin{lstlisting}[]
struct Reservation { // Add the 'era' field
	Node* list; // Init: invptr
	uint64 era; // Init: 0
};

void retire(Node* node) {
	if (!batch.first) { // the REFS node
		...
	} else { // SLOT nodes
		// Reuse the birth era of REFS to retain
		// the minimum birth era in the batch
		if (batch.refs->birth > node->birth)
			batch.refs->birth = node->birth;
		...
	}
	uint64 min_birth = batch.refs->birth;
	...
	// After Line 55, Figure 4, add:
	if (rsrv[i].era < min_birth) break;
}
\end{lstlisting}
\end{subfigure}
\hspace{2mm}
\begin{subfigure}{.51\columnwidth}
\lstset{language=C++}
\begin{lstlisting}[firstnumber=21]
uint64 global_era = 1;
thread_local int alloc_cnt = 0;

// Replace malloc() with alloc_node()
Node* alloc_node(int size) {
	if (!(alloc_cnt++ % ALLOC_FREQ)) FAA(&global_era, 1);
	Node* node = malloc(size);
	node->birth = global_era;
	return node;
}

Node* protect(Node** obj) {
	uint64 prev = rsrv[TID].era;
	while (true) {
		Node* node = *obj;
		uint64 era = global_era;
		if (prev == era) return node;
		rsrv[TID].era = era;
		prev = era;
}	}
\end{lstlisting}
\end{subfigure}
\caption{The Hyaline-1S reclamation scheme (showing changes only with respect to Hyaline-1).}
\label{fig:algrob}
\end{figure}

\section{Lock-Free Reclamation with Crystalline-L}

\label{sec:dc}

Hyaline-1S is still not lock-free generally speaking unless operations are
restarted for starving threads. The problem
arises when a thread performs operations that make more and more reservations in an unbounded loop (e.g., consider one ``unlucky'' thread that is stuck traversing a list because it keeps growing). Though restarting is not unreasonable
and certain data structures can be modified accordingly~\cite{PEBR,hyalineFULL,IBRPaper}, other
schemes such as HP and HE are lock-free without restarting.
Comparing HP/HE with IBR/Hyaline-1S, we can see that
the former two prevent unbounded memory usage due to slight API differences, which result in finer granularity of reservations. Would it be possible to
adopt HP/HE's API for Hyaline-1S? Surprisingly, not only we can do it (but we need to overcome challenges discussed below), such an algorithm is \emph{faster} and more memory \emph{efficient} than Hyaline-1S, as shown in Section~\ref{sec:evalution}.

We redefine the API such that each
thread bounds the number of retrieved local pointers. This API, which is similar to that of HP, has the following subtle differences with Hyaline-1S:
\begin{itemize}
\item \verb|protect()|: additionally passes an \emph{index} that is assigned to a specific
local pointer. \verb|protect()| no longer has a cumulative effect, each time
it resets any previous reservation associated with the specified index.
\item \verb|activate()|: \textbf{removed} from the API completely, as \verb|protect()| is now
non-cumulative.
\item \verb|clear()|: resets \emph{all} (rather than just one) local pointer reservations made by \verb|protect()|.
\end{itemize}

Note that Figure~\ref{fig:api} defines only one local pointer in \textbf{L18}, for which index 0 is used.
With this API, each thread holds at most MAX\_IDX local pointers (indices are 0..MAX\_IDX-1).
For each local pointer,
Crystalline-L maintains a separate list and era.
Due to the stricter API, even starving threads do not reserve
unbounded memory, making the scheme lock-free (Section~\ref{sec:correctness}).
Figure~\ref{fig:algdc} shows Crystalline-L's changes to Hyaline-1S. When retiring, a batch must be attached to \emph{all} reservation indices. Therefore, before it can be retired, a batch would need to have MAX\_THREADS$\times$MAX\_IDX+1 rather than MAX\_THREADS+1 nodes.

\begin{figure*}[ht]
\hspace{3mm}
\begin{subfigure}{.48\columnwidth}
\lstset{language=C++}
\begin{lstlisting}[]
Reservation rsrv[MAX_THREADS][MAX_IDX];

void clear() {
	for (int i = 0; i < MAX_IDX; i++) {
		Node* p = SWAP(&rsrv[TID][i].list, invptr);
		if (p != invptr) traverse(p);
}	}

void try_retire() { // Attempt to retire a batch
	uint64 min_birth = batch.refs->birth;
	Node* last = batch.first;
	// Check if we have a sufficient number of nodes
	for (int i = 0; i < MAX_THREADS; i++) {
		for (int j = 0; j < MAX_IDX; j++) {
			if (rsrv[i][j].list == invptr) continue;
			if (rsrv[i][j].era < min_birth) continue;
			if (last == batch.refs)
				return; // Ran out of nodes, exit
			last->slot = &rsrv[i][j];
			last = last->bnext;
	}	}
	// Retire if successful
	Node* curr = batch.first;
	int64 cnt = -REFC_PROTECT;
	for (; curr != last; curr = curr->bnext) {
		Reservation* slot = curr->slot;
		while (true) { // Do not check 'era' again,
			Node* prev = slot->list; // it can only grow
			if (prev == invptr) break;
			curr->next = prev;
			if (CAS(&slot->list, prev, curr)) {
				cnt++;
				break;
	}	}	}
	if (FAA(&batch.refs->refc, cnt) == -cnt) // Finish
		free_batch(batch.refs); // retiring: change refc
	batch.first = nullptr; batch.counter = 0;
}

\end{lstlisting}
\end{subfigure}
\hspace{2mm}
\begin{subfigure}{.495\textwidth}
\lstset{language=C++}
\begin{lstlisting}[firstnumber=39]
Node* protect(Node** obj, int index) {
	uint64 prev_era = rsrv[TID][index].era;
	while (true) {
		Node* ptr = *obj;
		uint64 curr_era = global_era;
		if (prev_era == curr_era) return ptr;
		prev_era = update_era(curr_era, index);
}	}

// Clean up the old list and set a new era
uint64 update_era(uint64 curr_era, int index) {
	if (rsrv[TID][index].list != nullptr) {
		Node* list = SWAP(&rsrv[TID][index].list, nullptr)
		if (list != invptr) traverse(list);
		curr_era = global_era;
	}
	rsrv[TID][index].era = curr_era;
	return curr_era;
}

void retire(Node* node) {
	if (!batch.first) { // the REFS node
		batch.refs = node;
		node->refc = REFC_PROTECT;
	} else { // SLOT nodes
		// Reuse the birth era of REFS to retain
		// the minimum birth era in the batch
		if (batch.refs->birth > node->birth)
			batch.refs->birth = node->birth;
		node->blink = batch.refs; // points to REFS
		node->bnext = batch.first;
	}
	batch.first = node;
	if (batch.counter++ % RETIRE_FREQ == 0) {
		// blink of REFS points to the 1st SLOT node
		batch.refs->blink = RNODE(batch.first);
		try_retire();
}	}
\end{lstlisting}
\end{subfigure}
\caption{The Crystalline-L reclamation scheme (showing changes only with respect to Hyaline-1S).}
\label{fig:algdc}
\end{figure*}

One problem with Hyaline-1S is that a batch must aggregate \emph{all} MAX\_THREADS+1 nodes before retirement can even start. If this number is too high, the algorithm may become suboptimal since retirement is delayed. This is further aggravated with Crystalline-L.

In practice, the required number of nodes is much
lower as each node is appended to the respective list only if list's era overlaps with batch's minimum birth era. In Crystalline-L, retirement can be tried sooner, irrespective of the number of threads. It implements \textit{dynamic} batches to avoid their excessive growth.
On average, batch sizes roughly equal the number of cores as eras for preempted threads are going to be behind. Other schemes (HP/IBR/HE/WFE)
amortize their scanning frequency on a similar scale for good performance.

\verb|retire()| periodically calls \verb|try_retire()|, which checks
how many reservation lists\footnote{We call lists of retired objects ``reservation lists'' since there are multiple lists in Crystalline-L/-W, one list per a reservation.} are to be changed for the batch to be retired and
records the location of each such reservation slot. If the number of nodes
in the batch suffices, \verb|try_retire()| completes the retirement
by appending the nodes to their corresponding slot lists.
\verb|try_retire()| may look similar to the method
in EBR, IBR, HE, or HP which periodically peruses thread-local lists to check
if any of the previously retired nodes are safe to delete. However, it is used for \textit{retirement} rather than \textit{de-allocation} in Crystalline-L.
Furthermore, \verb|try_retire()| has the previously mentioned upper bound
on nodes and is \emph{guaranteed} to eventually succeed after a \textit{finite} number of retries.

\section{Wait-Free Reclamation with Crystalline-W}

\label{sec:reclamation}

Our new wait-free reclamation scheme, Crystalline-W, extends Crystalline-L described in Section~\ref{sec:dc}.

\textbf{Assumptions.}
Similar to~\cite{WFE}, we assume a 64-bit CPU that supports
WCAS (to manipulate 64-bit eras) and wait-free FAA. Additionally, the CPU must support wait-free SWAP as previously discussed in Section~\ref{sec:atomics}.
These requirements are fully met by commonplace x86-64 and ARM64 architectures. Architectures without these instructions can always fall back to Crystalline-L, which of course forfeits wait-freedom.

Similar to WFE~\cite{WFE}, Crystalline-W slightly alters \verb|protect()|'s API. We pass an additional parameter, \textit{parent}, which refers
to a parent object where the hazardous reference is located (\verb|nullptr| is allowed for the static topmost locations).

\textbf{Challenges.}
The Crystalline-L scheme is only lock-free because of the two fundamental challenges, which we solve in the next two sections:
\begin{itemize}
    \item \verb|retire()| calls \verb|try_retire()|, which has an unbounded loop (Lines~25-34) when retiring nodes. Contention occurs if two \verb|update_era()| or \verb|try_retire()| modify the same list.
    \item \verb|protect()| has an unbounded loop (Lines~41-45) which must converge on the era value. \verb|alloc_node()| unconditionally changes the era clock to bound memory usage.
\end{itemize}

\subsection{List Tainting}

\verb|try_retire()|, shown in Figure~\ref{fig:algdc}, requires a CAS loop (Lines~25-34), which contends with unconditional SWAP in \verb|update_era()| (Line~51) or concurrent \verb|try_retire()|. Our solution to this problem is to use unconditional SWAP in \verb|try_retire()| as shown in Figure~\ref{fig:tainting}. We initialize the \textit{next} field of a retired node with \verb|nullptr|, and swap the current list head with the pointer to the retired node.

Aside from a corner case with the (\verb|invptr|) cleared reservation (for which we have to roll back unless Line~51 already takes it, see Section~\ref{sec:pseudo} for more details), two major cases need to be considered, as shown in Figure~\ref{fig:tainting}. If the retired node still has its next field intact, it simply attaches the previous list as its tail (Figure~\ref{fig:tainting}, part I). It is also possible that a thread associated with the reservation already called \verb|traverse()| for the just retired node. Crystalline-W's \verb|traverse()| additionally taints the retrieved \textit{next} pointer of all traversed nodes by using SWAP. (Note that a node is \textit{merely} traversed and dereferenced by the thread; only when the batch's reference counter reaches 0 is when the node is deallocated.) \verb|try_retire()| still holds the batch with the retired node, and it finds that the next field is tainted (Figure~\ref{fig:tainting}, part II). Thus, \verb|try_retire()| traverses the tail on behalf of the thread that just finished \verb|traverse()|.

\begin{figure*}[ht]
\begin{subfigure}{.49\textwidth}
\includegraphics[width=.9\columnwidth]{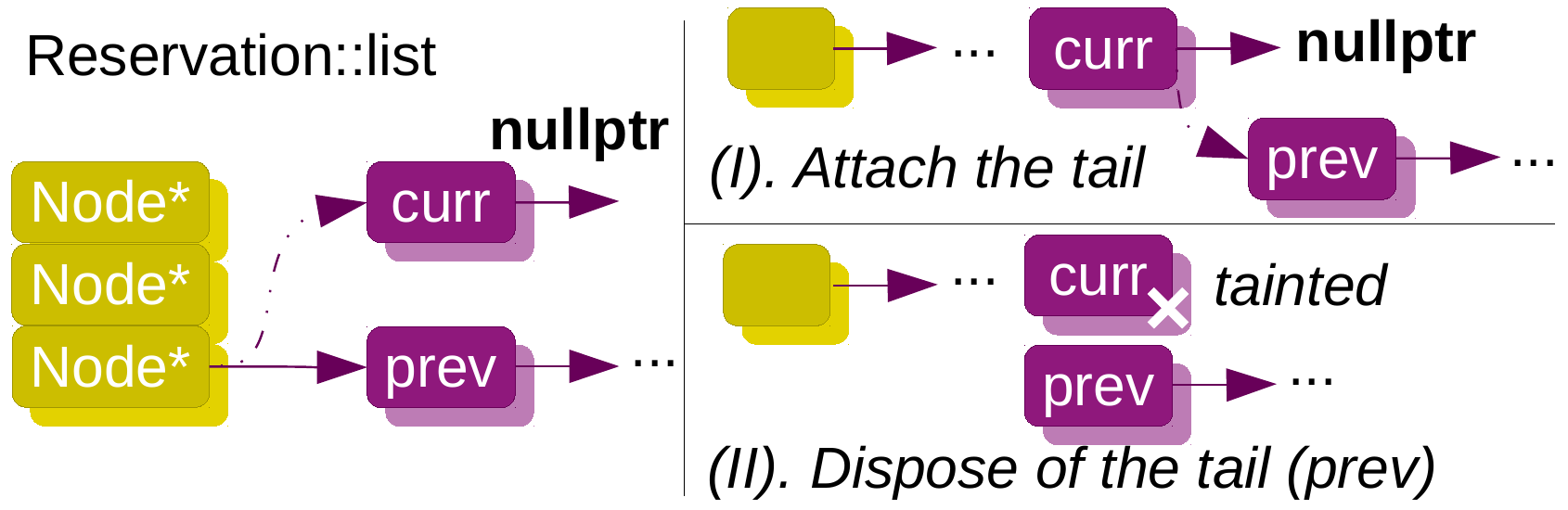}
\caption{List tainting}
\label{fig:tainting}
\end{subfigure}
\begin{subfigure}{.49\textwidth}
\includegraphics[width=\columnwidth]{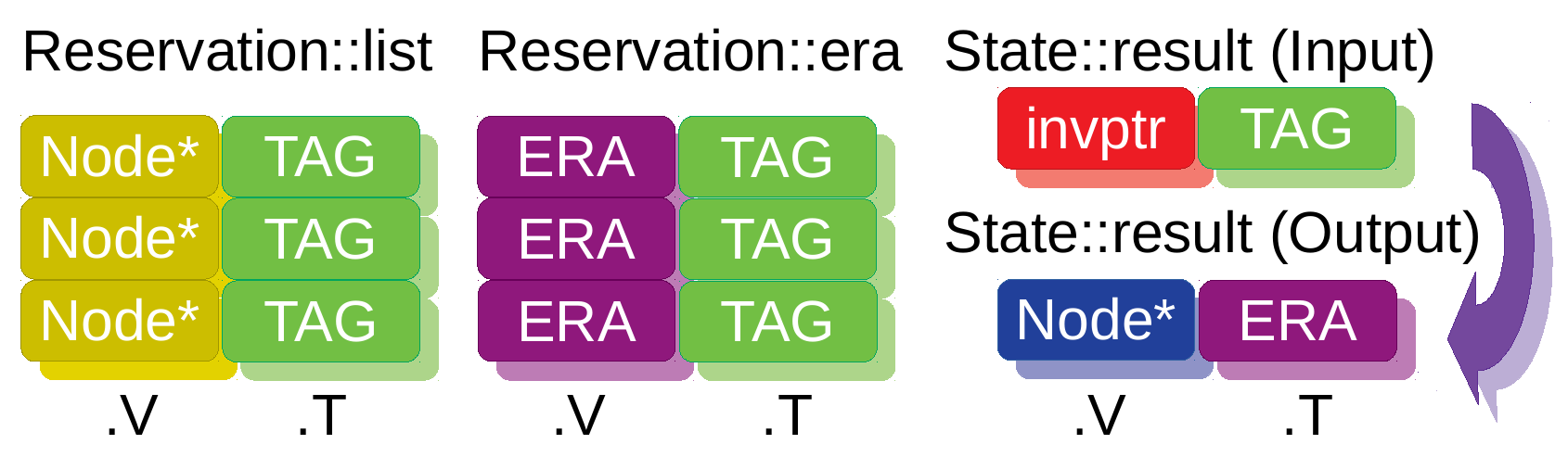}
\caption{Reservation and state format}
\label{fig:state}
\end{subfigure}
\caption{Crystalline-W algorithm.}
\end{figure*}

\subsection{Slow Path}

The second problem of making \verb|protect()| wait-free was previously considered by WFE~\cite{WFE} for hazard eras. The approach, which we also adopt in Crystalline-W, is based on a fast-path-slow-path idea, wherein \verb|alloc_node()|, instead of unconditionally incrementing the global era, runs a helper method \verb|increment_era()|. \verb|increment_era()| calls \verb|help_thread()| for every thread that needs helping, and only after that it increments the global era.

For a finite number of iterations, \verb|protect()| attempts to converge
on the era clock. After that, a slow path is taken by calling \verb|slow_path()|. Then, \verb|help_thread()| collaborates with \verb|slow_path()| to help converge \verb|protect()|.
Crystalline-W's approach, however, has major differences with that of WFE which we discuss below. Section~\ref{sec:pseudo} presents pseudocode.

\textbf{Data Formats.}
The \emph{rsrv} array is modified to contain pairs \{.V, .T\} by attaching \emph{tags} to both the list and era fields as shown in Figure~\ref{fig:state}. These tags identify the slow path cycle and are used to prevent spurious updates as in WFE.
For the Crystalline-W methods that are only shown in Figure~\ref{fig:algdc}, 
the .V component must be implied.
The \emph{rsrv} array is also extended by two special reservations, i.e., each thread has $MAX\_IDX+2$ indices. The two extra reservations are used internally by \verb|help_thread()|.

As in WFE, each thread maintains its \emph{state} for the slow path.
The \emph{result} field of \emph{state}  is used for both input and output. On input,
a current slow path cycle is advertised. Output contains a retrieved pointer with the corresponding era. In Figure~\ref{fig:state}, we demonstrate how input and output values must be aligned. To distinguish the two cases unambiguously, \verb|invptr| is placed as a pointer on input.
Finally, \emph{slow\_counter} counts the number of threads in the slow path. This is used to optimize \verb|increment_era()|.

\textbf{Retirement Status.}
\verb|alloc_node()| initializes \verb|blink|. When  its value is changed in \verb|retire()|, it will no longer be \verb|nullptr|. \verb|blink| will point to REFS (for SLOT nodes). This indicates that the node
is already in the process of retirement. Furthermore, the reference counter in REFS becomes immediately reachable from SLOT. To identify the REFS node itself, we steal one bit from the \verb|blink| field. (See new RNODE definitions for Crystalline-W.)

\textbf{Object Hand-Over.}
Crystalline-W uses one special reservation for the parent object and the other one for a retrieved object. However, a race condition happens if the parent object is retired by the time the corresponding reservation is made in \verb|help_thread()|. Likewise, a race condition happens in the opposite direction when assigning the retrieved object to the thread which is in \verb|slow_path()|.
Note that WFE avoids it by simply scanning the list of retired nodes twice, which is impossible with Crystalline-W.

These race conditions, obviously, are only considered for retired objects. To properly hand-over a parent object from \verb|slow_path()|, we maintain the \emph{parents} array. During the slow path initialization, a pointer to the parent is passed through \emph{state}.
When \verb|helper_thread()| enters, it will initialize its entry in \emph{parents} with this value. Subsequently, when \verb|slow_path()| exits, \verb|handover_parent()| is called. The latter iterates through \emph{parents} and tries to replace a matching parent to \verb|nullptr| with CAS. If successful, the reference counter of the parent is incremented. (REFC\_PROTECT\_HANDOVER protects from a premature object de-allocation.)
The other side subsequently detects the hand-over and dereferences  the object.

In a similar problem, when \verb|help_thread()| needs to pass the retrieved pointer back to \verb|slow_path()|, we access the actual retrieved (but already retired) node and increment its batch reference counter. Since we know the exact node, and \verb|protect()| is not yet complete, we clean up the existing reservation list and attach this batch's REFS as a ``terminal node.''

\textbf{REFS-Terminal Nodes.}
The idea behind REFS-terminal nodes is that they can only appear at the very end of the list. REFS-terminal nodes can be in as many lists as desirable. We steal one bit from the preceding pointer to indicate a REFS-terminal node. When encountering this node, \verb|traverse()| immediately terminates.

\subsection{Implementation}

\label{sec:pseudo}

Crystalline-W's high-level changes (with respect to Crystalline-L) are shown
in Figure~\ref{fig:algwfdcapi}.
Note that reservation's list and era are now tagged. Tags are used in
slow-path procedures only; fast-path procedures simply use the value component.
Crystalline-W defines per-thread \emph{state} (for each corresponding
reservation) used in slow-path procedures and \emph{slow\_counter} to
identify if any thread needs helping. Those are somewhat similar to
WFE's~\cite{WFE} corresponding slow-path variables. Finally, Crystalline-W
defines the \emph{parent} array to facilitate object handover, as discussed
previously. Object handover is unique to Crystalline-W since it cannot simply
scan the list of retired objects twice, as WFE, to avoid race conditions.
Figure~\ref{fig:algwfdcapi} also modifies \verb|alloc_node()| to internally
call \verb|increment_era()| in lieu of doing FAA on the global era
directly. Finally, \verb|protect()| calls \verb|slow_path()| if it fails
to converge after MAX\_TRIES.

\begin{figure*}[ht]
\hspace{3mm}
\begin{subfigure}{.48\columnwidth}
\lstset{language=C++}
\begin{lstlisting}[]
template <typename type> struct Tag {
	type   V;   // Value
	uint64 T;   // Tag, or Era for State::result
};

struct Reservation {
	Tag<Node*>  list;  // Init: {.V = nullptr, .T = 0}
	Tag<uint64> era;   // Init: {.V = 0, .T = 0}
};

struct State {
	Tag<void*> result; // Init: {.V = nullptr, .T = 0}
	uint64     era;    // Init: 0
	Node*      parent; // Init: nullptr
	Node**     obj;    // Init: nullptr
};

Reservation rsrv[MAX_THREADS][MAX_IDX+2];
State state[MAX_THREADS][MAX_IDX];
Node* parents[MAX_THREADS]; // Init: (all) nullptr
int slow_counter = 0;
\end{lstlisting}
\end{subfigure}
\hspace{2mm}
\begin{subfigure}{.487\columnwidth}
\lstset{language=C++}
\begin{lstlisting}[firstnumber=22]
// Help other threads before incrementing the era
Node* alloc_node(int size) {
	if (alloc_cnt++ % ALLOC_FREQ == 0) increment_era();
	Node* node = malloc(size);
	node->birth = global_era;
	node->blink = nullptr; // Retired if != nullptr
	return node;
}

// Use the fast-path-slow-path method
Node* protect(Node** obj, int index, Node* parent) {
	int tries = MAX_TRIES;
	uint64 prev_era = rsrv[TID][index].era.V;
	while (--tries != 0) {
		Node* ptr = *obj;
		uint64 curr_era = global_era;
		if (prev_era == curr_era) return ptr;
		prev_era = update_era(curr_era, index);
	}
	return slow_path(obj, index, parent);
}
\end{lstlisting}
\end{subfigure}
\caption{Crystalline-W (API function changes).}
\label{fig:algwfdcapi}
\end{figure*}

Figure~\ref{fig:algwfdretire} shows changes to \verb|try_retire()| and \verb|traverse()|. These methods use list tainting, as previously discussed.
Also, unlike WFE~\cite{WFE}, we use two slow-path tag transitions (odd and even). This is needed to make the number of iterations finite in some loops (Lemmas~\ref{theorem:prop3} and \ref{theorem:prop4}) by collaborating with \verb|try_retire()| which will skip odd tags.
Figure~\ref{fig:algwfdretire} also shows
\verb|increment_era()|'s implementation as used by \verb|alloc_node()|.

\begin{figure*}[ht]
\hspace{3mm}
\begin{subfigure}{.495\textwidth}
\lstset{language=C++}
\begin{lstlisting}[]
// Redefine RNODE to encode REFS links: steal one bit
// to indicate REFS nodes (also applies to other
// functions that previously used dummy RNODE)
#define IS_RNODE(x) (x & 0x1) // Check if a REFS link
#define RNODE(x)    (x ^ 0x1) // Encode or decode REFS

// Another huge addend for the slow path
// (in addition to previously defined REFC_PROTECT)
const uint64 REFC_PROTECT_HANDOVER = 1 << 62;

// Adds a special REFS-terminal node and list tainting
void traverse(Node* next) {
	while (next != nullptr) {
		Node* curr = next;
		if (IS_RNODE(curr)) { // REFS-terminal node
			// It is always the last node, exit
			Node* refs = RNODE(curr);
			if (FAA(&refs->refc, -1) == 1) free_batch(refs);
			break;
		}
		next = SWAP(&curr->next, invptr); // Tainting
		Node* refs = curr->blink;
		if (FAA(&refs->refc, -1) == 1) free_batch(refs);
}	}

// Increments the global era, replaces regular FAA
// (needs to help other threads first)
void increment_era() {
	if (slow_counter != 0) {
		for (int i = 0; i < MAX_THREADS; i++) {
			for (int j = 0; j < MAX_IDX; j++) {
				if (state[i][j].result.V == invptr)
					help_thread(i, j);
	}	}	}
	FAA(&global_era, 1);
}
\end{lstlisting}
\end{subfigure}
\hspace{2mm}
\begin{subfigure}{.478\textwidth}
\lstset{language=C++}
\begin{lstlisting}[firstnumber=37]
void try_retire() { // This replacement is wait-free
	uint64 min_birth = batch.refs->birth;
	Node* last = batch.first;
	// Also check odd tags to bound slow-path loops
	for (int i = 0; i < MAX_THREADS; i++) {
		for (int j = 0; j < MAX_IDX+2; j++) {
			if (rsrv[i][j].list.V == invptr ||
			    (rsrv[i][j].list.T & 0x1)) continue;
			if (rsrv[i][j].era.V < min_birth ||
			    (rsrv[i][j].era.T & 0x1)) continue;
			if (last == batch.refs)
				return; // Ran out of nodes, exit
			last->slot = &rsrv[i][j];
			last = last->bnext;
	}	}
	// Retire, make it wait-free by list tainting
	Node* curr = batch.first;
	int64 cnt = -REFC_PROTECT;
	for (; curr != last; curr = curr->bnext) {
		Reservation* slot = curr->slot;
		if (slot->list.V == invptr) continue;
		Node* prev = SWAP(&slot->list.V, curr);
		if (prev != nullptr) {
			if (prev == invptr) { // Inactive previously
				if (CAS(&slot->list.V, curr, invptr))
					continue; // Try to rollback
			} else { // Tainted: traverse the chopped tail
				if (!CAS(&curr->next, nullptr, prev))
					traverse(prev);
		}	}
		cnt++;
	}
	if (FAA(&batch.refs->refc, cnt) == -cnt)
		free_batch(batch.refs);
	batch.first = nullptr; batch.counter = 0;
}
\end{lstlisting}
\end{subfigure}
\vspace{-5pt}
\caption{Crystalline-W's try\_retire(), traverse(), and increment\_era().}
\label{fig:algwfdretire}
\end{figure*}

\begin{figure*}[ht]
\hspace{3mm}
\begin{subfigure}{.48\textwidth}
\lstset{language=C++}
\begin{lstlisting}[]
// Hand over the parent object if it is retired
void handover_parent(Node* parent) {
	if (parent && parent->blink != nullptr) {
		Node* refs = get_refs_node(parent);
		FAA(&refs->refc, REFC_PROTECT_HANDOVER);
		int64 cnt = -REFC_PROTECT_HANDOVER;
		for (int i = 0; i < MAX_THREADS; i++)
			if (CAS(&parents[i], parent, nullptr)) cnt++;
		FAA(&refs->refc, cnt);
}	}

uint64 get_birth_era(Node* node) { // Get parent's
	if (node == nullptr) return 0;   // birth era
	uint64 birth_era = node->birth;
	Node* link = node->blink;
	// For already retired SLOT nodes, use REFS' value
	if (link != nullptr && !IS_RNODE(link))
		birth_era = link->birth;
	return birth_era;
}
\end{lstlisting}
\end{subfigure}
\hspace{2mm}
\begin{subfigure}{.495\textwidth}
\lstset{language=C++}
\begin{lstlisting}[firstnumber=21]
// Makes the tag+1 transition and detaches an old list
void detach_nodes(int i, int j, int tag) {
	// A simple era tag transition: tag -> tag+1
	CAS(&rsrv[i][j].era.T, tag, tag+1);
	// Detach nodes and increment the list tag
	do { // Bounded by MAX_THREADS (try_retire checks
		old = rsrv[i][j].list; // the era tag)
		if (old.T != tag) break;
		bool success = WCAS(&rsrv[i][j].list,
		                     old, { nullptr, tag+1 });
	} while (!success);
	return success ? old.V : invptr; // Previous value
}

// Get REFS node from any node in a batch
Node* get_refs_node(Node* node) {
	Node* refs = node->blink;
	if (IS_RNODE(refs)) refs = node; // This node itself
	return refs;
}
\end{lstlisting}
\end{subfigure}
\vspace{-5pt}
\caption{Crystalline-W's utility functions for the slow path.}
\label{fig:algwfdc1}
\end{figure*}

\begin{figure*}[ht]
\hspace{2.5mm}
\begin{subfigure}{.487\textwidth}
\lstset{language=C++}
\begin{lstlisting}[]
void slow_path(Node** obj, int index, Node* parent) {
	// Getting parent's birth is tricky: for non-retir-
	// ed nodes use 'birth', else retrieve the minimum
	// birth from REFS, see get_birth_era() for details
	uint64 parent_birth = get_birth_era(parent);
	FAA(&slow_counter, 1);
	state[TID][index].obj = obj;
	state[TID][index].parent = parent;
	state[TID][index].era = parent_birth;
	uint64 tag = rsrv[TID][index].era.T;
	state[TID][index].result = { invptr, tag };
	uint64 prev_era = rsrv[TID][index].era.V;
	do { // Bounded by MAX_THREADS
		Node* list, * ptr = *obj;
		uint64 curr_era = global_era;
		if (curr_era == prev_era &&
		    WCAS(&state[TID][index].result,
		          { invptr, tag }, { nullptr, 0 })) {
			rsrv[TID][index].era.T = tag+2;
			rsrv[TID][index].list.T = tag+2;
			FAA(&slow_counter, -1);
			return ptr; // DONE
		}
		// Dereference previous nodes and update the era
		if (rsrv[TID][index].list.V != nullptr) {
			list = SWAP(&rsrv[TID][index].list.V,
			             nullptr);
			if (rsrv[TID][index].list.T != tag)
				goto produced; // Result was just produced
			if (list != invptr) traverse(list);
			curr_era = global_era;
		}
		// WCAS fails only when the result is produced
		WCAS(&rsrv[TID][index].era,
		      { prev_era, tag }, { curr_era, tag });
		prev_era = curr_era;
	} while (state[TID][index].result.V == invptr);
	list = detach_nodes(TID, index, tag); //tag+1 state
produced:
	ptr = state[TID][index].result.V;
	uint64 era = state[TID][index].result.T;
	rsrv[TID][index].era.V = era;
	rsrv[TID][index].era.T = tag+2;
	rsrv[TID][index].list.T = tag+2;
	// Check if the obtained node is already retired
	if (ptr && ptr->blink != nullptr) {
		Node* refs = get_refs_node(ptr);
		FAA(&refs->refc, 1);
		if (list != invptr) traverse(list);
		list = SWAP(&rsrv[TID][index].list.V,
		     RNODE(refs)); // Put a REFS-terminal node
	}
	FAA(&slow_counter, -1);
	// Traverse the previously detached list
	if (list != invptr) traverse(list);
	// Hand over the parent to all helper threads
	handover_parent(parent);
	return ptr; // DONE
}
\end{lstlisting}
\end{subfigure}
\hspace{2mm}
\begin{subfigure}{.51\textwidth}
\lstset{language=C++}
\begin{lstlisting}[firstnumber=60]
void help_thread(int i, int j) {
	Tag<void*> result = state[i][j].result;
	if (result.V != invptr) return;
	uint64 era = state[i][j].era;
	Node* parent = state[i][j].parent;
	if (parent != nullptr) {
		rsrv[TID][MAX_IDX].list.V = nullptr;
		rsrv[TID][MAX_IDX].era.V = era;
		parents[TID] = parent; // Advertise for a handover
	}
	Node** obj = state[i][j].obj;
	uint64 tag = rsrv[i][j].era.T;
	if (tag != result.T) goto changed;
	uint64 curr_era = global_era;
	do { // Bounded by MAX_THREADS
		prev_era = update_era(curr_era, MAX_IDX+1);
		Node* ptr = *obj;
		uint64 curr_era = global_era;
		if (prev_era == curr_era) {
			if (WCAS(&state[i][j].result, // Published the
			          result, { ptr, curr_era })) { // result
				Node* list = detach_nodes(i, j, tag);
				if (list != invptr) traverse(list);
				do { // Set the new era, <= 2 iterations
					old = rsrv[TID][index].era;
					if (old.T != tag+1) break;
				} while (!WCAS(&rsrv[TID][index].era,
				                old, { curr_era, tag+2}));
				// If the obtained node is already retired
				if (ptr && ptr->blink != nullptr) {
					Node* refs = get_refs_node(ptr);
					FAA(&refs->refc, 1);
					do { // Bounded by MAX_THREADS
						old = rsrv[TID][index].list;
						if (old.T != tag+1) break;
						ok = WCAS(&rsrv[TID][index].list,
						           old, { RNODE(refs), tag+2 }));
						if (ok && old.V != invptr) traverse(old.V);
						if (ok) goto done;
					} while (!ok);
					FAA(&refs->refc, -1); // Already inserted
				} else { // A simple tag transition
					CAS(&rsrv[TID][index].list.V, tag+1, tag+2);
			}	}
			break;
		}
	} while (state[i][j].result == result);
done:
	Node* lst = SWAP(&rsrv[TID][MAX_IDX+1].list.V,invptr)
	traverse(lst);
changed: // If handover occurs, dereference the parent
	if (parent != nullptr) {
		if (SWAP(&parents[TID], nullptr) != parent) {
			Node* refs = get_refs_node(parent);
			if (FAA(&refs->refc, -1) == 1) free_batch(refs);
		}
		Node* lst = SWAP(&rsrv[TID][MAX_IDX].list.V,invptr)
		traverse(lst);
}	}
\end{lstlisting}
\end{subfigure}
\caption{Crystalline-W's slow-path methods.}
\label{fig:algwfdc2}
\end{figure*}

Crystalline-W's slow-path and helper thread routines are shown
in Figure~\ref{fig:algwfdc2}. These routines use several utility
methods shown in Figure~\ref{fig:algwfdc1}.
The general idea is similar to that of WFE~\cite{WFE}, with one
major difference: we use the \emph{parent} array to keep parent
references to facilitate object hand-overs, as previously discussed.

Utility methods in Figure~\ref{fig:algwfdc1} are needed to facilitate
the slow path. \verb|get_birth_era()| uses a trick to retrieve
the birth era irrespective of whether the node is retired.
WFE~\cite{WFE} always keeps the birth era. However, the Hyaline and Crystalline
schemes recycle the birth era field after retirement so that they still
use 3 words per each memory object.
Since the birth era does not survive node retirements (except REFS which
stores the minimum era for the batch), that presents a challenge for
Crystalline-W which needs to transfer the parent's era in \verb|slow_path()|, and the parent object can already end up being retired.
We use the following trick. When the parent node is still \emph{not} retired,
we simply retrieve the birth era from the node. Otherwise, we retrieve REFS'
value of the minimum era.

\section{Correctness}

\label{sec:correctness}

Both Crystalline-L and Crystalline-W are based on Hyaline-1S, for which correctness is discussed in~\cite{hyalineFULL}. In this section, we present
memory bounds arguments which are critical for lock-free progress
guarantees. We further discuss aspects that pertain to wait-free progress.

\begin{theorem}
Crystalline-L and Crystalline-W are fully memory bounded.\footnote{This is not generally true for Hyaline-1S due to starving threads.}
\end{theorem}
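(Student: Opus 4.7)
The goal is to bound the total unreclaimed memory by a function of $n = \text{MAX\_THREADS}$, $k = \text{MAX\_IDX}$, and the batch threshold $B$, independent of the number of client operations or how long any thread stalls. I would decompose unreclaimed memory into (i) nodes sitting in thread-local \texttt{batch} fields awaiting retirement, which contribute at most $O(nB)$ globally, and (ii) fully retired batches still pinned by some reservation list. Only (ii) is nontrivial.

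For Crystalline-L I would argue reservation by reservation. Fix an index $(i,j)$. By inspection of \texttt{try\_retire()}, a batch $b$ is spliced into \texttt{rsrv[i][j]} only while that slot is active and its era is at least $b.\text{min\_birth}$. Crucially, each iteration of \texttt{protect()} that does not return calls \texttt{update\_era()}, which first SWAPs the list to \texttt{nullptr} and traverses it, discharging the REFS counter for every batch previously pinned through this slot; \texttt{clear()} does the same. Hence at any moment the slot holds only batches spliced since the most recent era change observed by the owning thread. Because \texttt{alloc\_node()} periodically forces \texttt{global\_era} to advance, the set of batches eligible to pin any particular stalled slot is finite and bounded by a polynomial in $n$ and $k$; summing over $nk$ slots and multiplying by $B$ yields the global bound. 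The key difference from Hyaline-1S, which is \emph{not} memory bounded, is that the non-cumulative \texttt{protect()} API combined with per-index list cleanup decouples the bound from stall duration.

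For Crystalline-W the Crystalline-L counting applies verbatim to the $k$ value slots per thread, and the two extra helper-slot reservations per thread add only $O(n)$ more slots. The \emph{state} and \emph{parents} arrays are of fixed size $O(nk)$ and $O(n)$. REFS-terminal nodes inserted during a successful slow path do not create a new class of pinned memory: each carries a matching \texttt{REFC\_PROTECT\_HANDOVER} or explicit FAA increment on the corresponding REFS counter that is later discharged, and the odd/even tag discipline causes \texttt{try\_retire()} to skip slots whose tags are mid-transition, so no new pin attempts accumulate during an active slow-path cycle. The bounded-iteration lemmas (\ref{theorem:prop3}, \ref{theorem:prop4}) will be invoked to guarantee every slow-path participant eventually quiesces and releases its transient references.

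The hard part, I expect, is the hand-over accounting: every protective increment of a REFS counter performed during \texttt{slow\_path()}, \texttt{help\_thread()}, or \texttt{handover\_parent()} must be discharged along every control-flow path, even when a helper races with the retirement of the parent itself or with the publication of a result. I would first establish the invariant that at all times a batch's REFS counter equals the number of reservation lists plus outstanding hand-overs plus protect addends that still reference it, proving it by case analysis over the interleavings of \texttt{try\_retire()}, \texttt{traverse()}, \texttt{slow\_path()}, \texttt{help\_thread()}, and \texttt{handover\_parent()}. Memory boundedness then follows because each entity appearing on the right-hand side of this invariant is itself bounded by the earlier static counting argument.
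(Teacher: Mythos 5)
Your proposal follows essentially the same route as the paper's proof: a counting argument that multiplies the maximum batch size (MAX\_THREADS$\times$MAX\_IDX+1 nodes, with MAX\_IDX+2 for Crystalline-W) by the number of reservations a batch can be pinned to, plus the RETIRE\_FREQ amortization, with the era filter in \texttt{try\_retire()}, the list emptying in \texttt{update\_era()}/\texttt{clear()}, and the periodic era advance in \texttt{alloc\_node()} supplying the reason a stalled reservation pins only finitely many batches (a point the paper defers to Lemma~\ref{lemma:traverse} rather than proving inside this theorem). Your additional Crystalline-W obligations --- the hand-over/REFS-counter invariant and the tag-discipline argument --- are reasonable but go beyond the paper's three-sentence proof, which handles Crystalline-W by merely substituting MAX\_IDX+2; note also that both you and the paper assert, rather than fully derive, the key quantitative claim that the number of batches pinnable through a single stalled slot is statically bounded.
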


\begin{proof}
Batch sizes are constrained
by MAX\_THREADS$\times$MAX\_IDX+1 nodes. In the worst case, each
batch is
attached to every reservation. The total number of reservations is
MAX\_THREADS$\times$MAX\_IDX. Consequently, the memory usage is bounded by
(MAX\_THREADS$\times$MAX\_IDX+1)$^2$.
Since era increments are also amortized, the total cost is RETIRE\_FREQ$\times$(MAX\_THREADS$\times$MAX\_IDX+1)$^2$. (Note that for Crystalline-W, MAX\_IDX+2 should be used rather than MAX\_IDX since the total number of indices is bigger.)

Note that this theoretical upper bound is worse than that of HE~\cite{pedroHEFULL} or WFE~\cite{WFE}. However, in practice, batches do not accumulate this (worst-case) number of nodes and are retired much faster, often resulting in better practical efficiency. Regardless of that, this worst-case bound is still reasonable and finite.
\end{proof}

\begin{lemma}
\verb|traverse()| calls are wait-free bounded.
\label{lemma:traverse}
\end{lemma}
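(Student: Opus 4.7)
The plan is to reduce the claim to two facts: (i) the chain of nodes that any one \verb|traverse()| call actually walks has length bounded in terms of MAX\_THREADS and MAX\_IDX, and (ii) each loop iteration does $O(1)$ atomic work (plus an occasional \verb|free_batch()| call whose \verb|bnext| walk is itself bounded). Together these give a hard step bound on every \verb|traverse()| invocation, which is what wait-free boundedness requires here.

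For (i), I would first argue that the chain passed to \verb|traverse()| is caller-owned and cannot grow during the walk. Every call site obtains its argument from a SWAP that has already replaced the relevant reservation-list head with \verb|invptr| (in \verb|clear|, \verb|update_era|, \verb|slow_path|, and \verb|help_thread|) or with a different retiring node \verb|curr| (in \verb|try_retire|'s chopped-tail case), so the head is no longer reachable from the reservation array. Stability of the interior rests on the list-tainting discipline: \verb|traverse()| SWAPs each visited \verb|next| with \verb|invptr|, and the only other writer of \verb|curr->next| is \verb|try_retire()|, which uses a CAS from \verb|nullptr| to \verb|prev|; if traverse has already tainted \verb|curr|, that CAS fails and \verb|try_retire| itself takes over the chopped tail with a fresh \verb|traverse(prev)| call rather than linking anything into the chain being walked. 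Since each SLOT node sits on exactly one reservation list, no second \verb|traverse| can intersect this chain either.

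Given stability, I would bound the chain length by the number of live batches. Each successful \verb|try_retire()| appends at most one SLOT node to any given reservation list, and Theorem~1 bounds total live nodes by $(\text{MAX\_THREADS}\cdot\text{MAX\_IDX}+1)^2$; with $\text{MAX\_THREADS}\cdot\text{MAX\_IDX}+1$ nodes per retired batch, at most that many batches can coexist, capping the chain at $\text{MAX\_THREADS}\cdot\text{MAX\_IDX}+1$ SLOT links plus one optional trailing REFS-terminal node that immediately \verb|break|s the loop. Combined with (ii), this yields an $O((\text{MAX\_THREADS}\cdot\text{MAX\_IDX}+1)^2)$ step bound per invocation. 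The main obstacle I expect is the stability argument: one has to enumerate the interleavings of \verb|try_retire|, the two slow-path methods, and concurrent \verb|traverse|s to verify that no node already visited by a running \verb|traverse| can acquire a new predecessor via a late CAS on its \verb|next| field, and that the REFS-terminal node really is only installed at the head and subsequently migrates to the tail. Once those invariants are in hand, the length bound follows from Theorem~1 with no extra work.
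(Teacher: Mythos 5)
Your argument reaches the right conclusion but by a genuinely different route from the paper's. The paper's proof never invokes the global memory bound at all: it bounds the length of a single reservation list directly via the era mechanism --- \verb|try_retire()| appends a batch to a reservation only when the batch's minimum birth era overlaps the reservation's era, \verb|update_era()| empties the list whenever the era changes, and \verb|alloc_node()| advances the global era periodically, so only finitely many batches can ever qualify for a fixed reservation era. You instead argue (i) that the chain handed to \verb|traverse()| is snapshotted by the call-site SWAP and kept stable by the tainting discipline, and (ii) that its length at snapshot time is at most the number of live batches, which you extract from Theorem~1. Your point (i) is a real contribution: the paper silently assumes the walked chain cannot grow, and your enumeration of the interleavings with \verb|try_retire()|'s CAS-on-\verb|next| and the REFS-terminal placement is exactly the invariant one would need to make the paper's one-line claim airtight.

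Two caveats on point (ii), though. First, the reliance on Theorem~1 is uncomfortably close to circular: Theorem~1's own proof asserts that only $O(\text{MAX\_THREADS}\times\text{MAX\_IDX}+1)$ batches can be simultaneously unreclaimed, and the only mechanism in the algorithm that actually enforces a bound on how many batches pile up on one reservation list is the era-overlap check --- i.e., the very fact the paper's proof of this lemma supplies. Formally you may cite Theorem~1 since it precedes the lemma, but your proof then inherits the weakest step of the paper rather than grounding the bound in the eras, which is the idea the authors intend the reader to see here. Second, your division ``total nodes / (MAX\_THREADS$\cdot$MAX\_IDX$+1$) nodes per batch'' is wrong for Crystalline's \emph{dynamic} batches: \verb|try_retire()| succeeds as soon as the batch covers the currently active, era-overlapping reservations, so a retired batch may contain far fewer than MAX\_THREADS$\times$MAX\_IDX$+1$ nodes. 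Each batch on the list still contributes at least its own SLOT node, so finiteness (which is all wait-freedom needs) survives, but your stated constant does not.
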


\begin{proof}
The loop in \verb|traverse()| is bounded by the length of the list of retired batches at a given reservation. \verb|try_retire()| attaches (Line~58, Figure~\ref{fig:algwfdretire}) only those batches
for which the minimum birth era overlaps with the reservation's era.
(Note that every time the era is updated by \verb|update_era()|, the list is emptied.)
Since the eras are periodically incremented in \verb|alloc_node()|, the number of such nodes, and consequently -- batches, is finite.
\end{proof}

\begin{lemma}
\verb|detach_nodes()|'s loop is bounded by MAX\_THREADS iterations.
\label{theorem:prop3}
\end{lemma}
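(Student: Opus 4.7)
My plan is to show that the only way the WCAS at the heart of \verb|detach_nodes| can fail repeatedly is if \verb|list.V| is being mutated by other threads, and that the number of such mutations is at most one per other thread (hence bounded by MAX\_THREADS). Every other exit condition (\verb|old.T != tag|) breaks the loop immediately.

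First I would observe that the very first statement of \verb|detach_nodes| is the CAS that advances \verb|rsrv[i][j].era.T| from \verb|tag| to the odd value \verb|tag+1|. From this moment on, every \emph{freshly started} iteration of \verb|try_retire| (Figure~\ref{fig:algwfdretire}, Lines~41--51) will skip this slot, because of the explicit check \verb|(rsrv[i][j].era.T & 0x1)|. Consequently, the only \verb|try_retire| executions that can still touch \verb|list.V| via SWAP are those that had already passed the era-tag test before our CAS landed. Since each thread executes at most one \verb|try_retire| at a time, there are at most MAX\_THREADS such in-flight invocations in the whole system, and each of them performs the SWAP on \verb|list.V| at most once. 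Hence at most MAX\_THREADS external SWAPs on \verb|list.V| can occur while our loop is running.

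Next I would handle the remaining potential writers to \verb|rsrv[i][j].list|. The calls \verb|update_era| and \verb|clear| only touch the caller's own reservation slot; and the thread that owns slot \verb|(i,j)| is either the caller of \verb|detach_nodes| itself (when invoked from \verb|slow_path|) or is blocked in \verb|slow_path| waiting for a helper (when invoked from \verb|help_thread|). In neither case can the owner concurrently SWAP \verb|list.V|. Concurrent helpers racing on the same \verb|(i,j,tag)| cannot cause additional WCAS failures either: a successful WCAS promotes \verb|list.T| to \verb|tag+1|, and any subsequent read by a racing helper will see \verb|old.T != tag| and take the \verb|break| branch. Thus the tag component contributes no additional iterations beyond the one that terminates the loop.

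Combining the two observations, the loop can iterate at most once per external SWAP caused by an already-running \verb|try_retire|, plus one final iteration that either succeeds or observes \verb|old.T != tag|. This yields a bound of MAX\_THREADS iterations, as claimed. The main subtlety I expect is the race window between the era-tag CAS and the WCAS on \verb|list|: one has to argue carefully that no new \verb|try_retire| can slip past the tag check after the CAS, and that the ``already past the check'' population is capped at one per thread; once this accounting is pinned down the rest of the argument is routine.
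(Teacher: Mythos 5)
Your proposal is correct and takes essentially the same approach as the paper: the era-tag CAS to the odd value \verb|tag+1| causes all subsequently started \verb|try_retire()| calls to skip the slot, so only the at-most-MAX\_THREADS already-in-flight \verb|try_retire()| invocations (one per thread) can still perform the unconditional SWAP on \verb|list.V|, bounding the WCAS failures. Your additional accounting for the other potential writers (\verb|update_era|, \verb|clear|, racing helpers) is a reasonable elaboration of the same argument rather than a different route.
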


\begin{proof}
\verb|detach_nodes()| makes the tag->tag+1 (odd) transition in the slow path.
The CAS operation in Line~24, Figure~\ref{fig:algwfdc1}, changes the era tag
		unless it was already changed by a concurrent thread. For the loop in Lines~26-31, Figure~\ref{fig:algwfdc1}
to continue, the list tag should have not yet moved to the tag+1 state. Because
		the era tag moves to the tag+2 (even) state only after that transition (Line~43 or Lines~83-87, Figure~\ref{fig:algwfdc2}), i.e., after \verb|detach_nodes()| in Line~38~or~81 (Figure~\ref{fig:algwfdc2}), the era tag
is still tag+1. Consequently, all contending threads in \verb|try_retire()| will
		find that the era tag is odd (Line~46, Figure~\ref{fig:algwfdretire}) and will skip the corresponding reservation
for retirement. (Note that skipping nodes is safe because this race window is
		handled later by Lines~46~and~89, Figure~\ref{fig:algwfdc2}.) Only threads that are already in-progress will proceed and potentially contend because of unconditional list pointer updates in Line~58, Figure~\ref{fig:algwfdretire}. The number of such threads is bounded by MAX\_THREADS, as subsequent \verb|try_retire()| calls will observe that the era tag is odd.
\end{proof}

\begin{lemma}
The loop in Lines~92-99, Figure~\ref{fig:algwfdc2} is bounded by at most
MAX\_THREADS iterations.
\label{theorem:prop4}
\end{lemma}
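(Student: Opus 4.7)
The plan is to mirror the proof of Lemma~\ref{theorem:prop3} and reduce the bound to a count of concurrent writers of the reservation list. First I would observe that the loop at Lines~92--99 exits as soon as \verb|rsrv[i][j].list.T| moves off \verb|tag+1| (break at Line~94) or the WCAS succeeds (goto at Line~98); an additional iteration therefore requires that some other thread has modified \verb|rsrv[i][j].list.V| while leaving \verb|rsrv[i][j].list.T| equal to \verb|tag+1|. My task reduces to bounding how many such interfering writes can occur.

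Next I would enumerate the possible writers of \verb|rsrv[i][j].list|. Competing helpers at the same index $(i,j)$ cannot reach Line~92: the short-circuit test at Line~62 sends any late-arriving helper back immediately, and the result-publishing WCAS at Lines~79--80 admits exactly one winner. The helped thread $i$ itself is parked inside \verb|slow_path()| until the result is published; once it wakes up, its own call to \verb|detach_nodes(TID,index,tag)| at Line~38 returns \verb|invptr| without touching the list because the list tag has already been advanced past \verb|tag|, and the subsequent non-atomic assignment \verb|list.T = tag+2| at Line~43 strictly lifts the tag above \verb|tag+1|, producing at most one additional failed WCAS before the helper breaks at Line~94. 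The only genuine source of contention is therefore the unconditional SWAP on \verb|rsrv[i][j].list.V| performed by \verb|try_retire()| at Line~58 of Figure~\ref{fig:algwfdretire}.

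The core step is then exactly the odd-tag trick used in Lemma~\ref{theorem:prop3}. The helper's call to \verb|detach_nodes()| at Line~81 has already installed the odd value \verb|tag+1| on \verb|rsrv[i][j].list.T|, so every \verb|try_retire()| that subsequently evaluates the list-tag oddness check at Lines~43--44 of Figure~\ref{fig:algwfdretire} skips this reservation entirely and never reaches the retirement phase for it. The only \verb|try_retire()| calls that can still SWAP into \verb|rsrv[i][j].list.V| are those that had already passed that check before the transition. Because each thread runs at most one \verb|try_retire()| at a time and records each reservation in its batch at most once (and thread $i$, being parked in \verb|slow_path()|, contributes none), there are at most \verb|MAX_THREADS| such in-flight writes, hence at most \verb|MAX_THREADS| iterations of the loop.

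The step I expect to require the most care is the case analysis ruling out other writers of \verb|rsrv[i][j].list|, especially convincing ourselves that thread $i$'s own post-publication tail in \verb|slow_path()| and concurrent \verb|help_thread()| invocations at the same $(i,j)$ contribute only finitely many failed WCASes. Once that contender list is nailed down, the bound follows from the odd-tag mechanism precisely as in Lemma~\ref{theorem:prop3}.
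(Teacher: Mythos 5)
Your proof is correct and takes essentially the same route as the paper's, which simply reduces the claim to Lemma~\ref{theorem:prop3}: while this loop runs, the list tag holds the odd value tag+1 (installed by \texttt{detach\_nodes()} at Line~81), so the parity check at Line~44 of Figure~\ref{fig:algwfdretire} makes every subsequently arriving \texttt{try\_retire()} skip the reservation, leaving only the at most MAX\_THREADS already in-flight unconditional SWAPs as possible interference. Your additional case analysis ruling out other writers (competing helpers, the helped thread's own post-publication stores) is more explicit than the paper's one-line proof but is consistent with it.
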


\begin{proof}
The proof is similar to that of Lemma~\ref{theorem:prop3}. The only difference is that it makes the tag+1->tag+2 (even) transition in the slow path.
Consequently, \verb|try_retire()| will find that the list tag is odd (Line~44, Figure~\ref{fig:algwfdretire}).
\end{proof}

\begin{lemma}
The loop in Lines~13-37, Figure~\ref{fig:algwfdc2} is bounded by at most
MAX\_THREADS iterations.
\label{theorem:prop1}
\end{lemma}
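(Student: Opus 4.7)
The plan is to charge each iteration of the loop at Lines 13--37 to a distinct increment of \texttt{global\_era}, and then bound the number of era increments that can occur without some helper producing a result by MAX\_THREADS. First I would analyze the two exit conditions. The early return at Lines 17--22 fires whenever \texttt{curr\_era == prev\_era} and the WCAS on \texttt{state[TID][index].result} succeeds. That WCAS can only fail if \texttt{state.result} is no longer equal to the initial \texttt{\{invptr,\,tag\}}; within a single slow-path cycle the tag does not change (cf.\ the tag discipline used in Lemmas~\ref{theorem:prop3} and~\ref{theorem:prop4}), so any such failure implies some helper thread has already written a real \texttt{(ptr,\,era)} pair, which in turn falsifies the while-guard and terminates the loop on its next check. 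Hence every non-terminal iteration must observe \texttt{curr\_era != prev\_era}, i.e.\ \texttt{global\_era} strictly increased since \texttt{prev\_era} was last recorded.

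Next I would tie era increments to help invocations via \texttt{slow\_counter}. Because \texttt{slow\_path} increments \texttt{slow\_counter} and writes \texttt{\{invptr,\,tag\}} into \texttt{state[TID][index].result} before entering the loop, and because \texttt{increment\_era} (the sole path to bumping \texttt{global\_era} from \texttt{alloc\_node}) scans every \texttt{state[i][j]} whose \texttt{result.V == invptr} and invokes \texttt{help\_thread} on that slot before its FAA, each global-era increment that occurs during our loop is preceded by a \texttt{help\_thread} call targeted at our slot. Therefore the number of iterations is bounded above by the number of \texttt{help\_thread} invocations on \texttt{(TID,\,index)} that neither produce the result nor witness a peer producing it.

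The third step is to analyze the fate of such a \texttt{help\_thread} invocation. Once inside, the helper either returns early because \texttt{state.result.V != invptr} (in which case our loop exits next iteration) or enters its own loop at Line~74 and, whenever \texttt{prev\_era == curr\_era}, attempts the WCAS publishing \texttt{(ptr,\,curr\_era)}. A successful WCAS causes our loop to exit; a failed WCAS at that point means another helper already published, again causing exit. The only way the helper completes without ending our loop is if each of its era-matching attempts is invalidated by yet another concurrent era increment, but each such increment is itself fronted by a \texttt{help\_thread} call on our slot, and its corresponding thread cannot also witness an era change caused by itself. Combining this with the fact that \texttt{help\_thread}'s outer loop is bounded (as its comment at Line~74 and the tag-based arguments of Lemmas~\ref{theorem:prop3}--\ref{theorem:prop4} establish), each helper invocation terminates in finitely many steps.

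Finally, the MAX\_THREADS bound comes from a pigeonhole-style accounting: at most MAX\_THREADS distinct threads exist, and the mutual-interference chain that keeps helpers from publishing is built from pairs \emph{(interrupted helper, interrupter)} where the interrupter is a distinct thread that has itself just been invoked as a helper. After MAX\_THREADS iterations, every other thread has been charged with one era increment that was itself preceded by a help attempt; since helpers only fail by chaining, at least one such chain must bottom out in a successful WCAS on \texttt{state[TID][index].result}, whereupon the guard \texttt{state[TID][index].result.V == invptr} becomes false and the loop exits. The main obstacle is making precise the chain-of-interference bookkeeping that turns ``at most MAX\_THREADS concurrent in-flight era-increments'' into a hard bound on iterations; the tag discipline that cleanly separates slow-path cycles (the same discipline invoked in the preceding lemmas) is what rules out a stale helper from an earlier cycle silently consuming an iteration here.
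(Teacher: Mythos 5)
The center of your argument --- the claim in your second step that ``each global-era increment that occurs during our loop is preceded by a \texttt{help\_thread} call targeted at our slot'' --- is false, and the failure of exactly this claim is what the paper's proof is built around. A thread may already be executing \texttt{increment\_era()} at the moment the slow-path thread advertises \texttt{\{invptr, tag\}} at Line~11 of Figure~\ref{fig:algwfdc2}: it may have read \texttt{slow\_counter} as zero before our FAA, or it may have already scanned \texttt{state[TID][index].result} (Line~32 of Figure~\ref{fig:algwfdretire}) and seen a non-\texttt{invptr} value, and in either case it proceeds to the FAA on \texttt{global\_era} at Line~35 without ever helping us. These ``in-flight'' threads are precisely the source of the MAX\_THREADS bound: there are at most MAX\_THREADS of them, each performs at most one such unhelped era increment, and each unhelped increment costs our loop at most one failed iteration. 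Every \texttt{increment\_era()} call that \emph{begins} after our advertisement does see the advertisement and completes \texttt{help\_thread(TID, index)} before touching \texttt{global\_era}; a completed helper leaves \texttt{state[TID][index].result} published (its WCAS either succeeds or fails because another helper already published), so the very next check of the while-guard terminates our loop. No further counting is needed.

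Because your premise attributes a preceding help call to \emph{every} increment, you are forced into the circular ``chain of interference'' bookkeeping of your third and fourth steps (a helper is interrupted by an increment, which is fronted by another helper, which is interrupted \dots), and you correctly flag that making this chain bottom out is the unresolved obstacle. It does not bottom out as stated: interrupting era increments come from arbitrary \texttt{alloc\_node()} callers, not from threads that are themselves helping our slot, so the pairs \emph{(interrupted helper, interrupter)} need not consume distinct threads in the way a pigeonhole argument requires. The fix is to abandon the chain entirely and partition era increments into the at most MAX\_THREADS in-flight ones (charged one iteration each) versus all later ones (each of which is preceded by a completed help of our slot and therefore cannot be followed by another iteration). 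Your first step --- that every non-terminal iteration witnesses a strict increase of \texttt{global\_era}, with the tag discipline ruling out stale cross-cycle writers of \texttt{state.result} --- is correct and matches the paper; it is the accounting of \emph{which} increments can occur unhelped that needs to be replaced.
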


\begin{proof}
In Line~11, Figure~\ref{fig:algwfdc2}, a thread advertises that it needs help. The loop in Lines~13-37 can only fail to converge because of \emph{global\_era} updates. At most MAX\_THREADS
already in-progress threads
		are executing \emph{increment\_era()} from \emph{alloc\_block()}, prior to Line~35, Figure~\ref{fig:algwfdretire}, which updates \emph{global\_era}, but after Line~32, Figure~\ref{fig:algwfdretire}, which detects
what threads need helping. All these threads will execute Line~35.
		That will cause the loop in Lines~13-37 (Figure~\ref{fig:algwfdc2}) fail and repeat.
However, all newer \emph{increment\_era()} calls will only update \emph{global\_era} after \emph{help\_thread()} is complete.
\end{proof}

\begin{lemma}
The loop in Lines~74-106, Figure~\ref{fig:algwfdc2} is bounded by at most
MAX\_THREADS iterations.
\label{theorem:prop2}
\end{lemma}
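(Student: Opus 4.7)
The plan is to mirror the argument used in Lemma~\ref{theorem:prop1}. The loop at Lines~74--106 iterates only when global\_era advanced between the update\_era() call and the subsequent re-read of global\_era -- i.e.\ when the prev\_era~==~curr\_era test inside the inner conditional fails; otherwise the conditional is entered and the loop breaks out, regardless of whether the WCAS on state[i][j].result succeeds. Hence bounding the iteration count reduces to bounding the number of global\_era increments that can interleave with this particular help\_thread invocation.

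First I would observe that throughout every execution of the loop body, state[i][j].result.V holds invptr: the early return on Line~62 guarantees this on entry, and the only write that could change it is the successful WCAS inside the inner conditional, which is immediately followed by a break, so that iteration is already accounted for. Next I would inspect increment\_era() in Figure~\ref{fig:algwfdretire}: before performing its FAA on global\_era, every thread scans state[i][j].result.V, and if it observes invptr it calls help\_thread(i,j) first. Consequently the only threads able to advance global\_era without first publishing a result for (i,j) are those already in the race window between their scan of state[i][j] and their FAA on the era clock.

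Second, I would bound the number of such in-flight threads by MAX\_THREADS, since each thread has at most one outstanding increment\_era() execution at a time and each such execution contributes at most one problematic era bump. Each bump can be blamed for at most one extra iteration of the do-while loop in which prev\_era and curr\_era disagree, giving the MAX\_THREADS bound.

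Third, I would argue termination. Once these in-flight increments have drained, every subsequent increment\_era() call observes state[i][j].result.V as invptr and is forced to invoke help\_thread(i,j) before bumping the era. This guarantees that either this thread's own WCAS inside the inner conditional eventually succeeds (taking the break and publishing the result) or a concurrent helper publishes first, which falsifies the do-while condition state[i][j].result~==~result on Line~106. Either way the loop exits after at most MAX\_THREADS iterations. The main obstacle, and the step that I expect to require the most care, is making the race-window argument between Lines~31 and~35 of Figure~\ref{fig:algwfdretire} precise enough to rule out any pathological interleaving in which an unbounded chain of increment\_era() executions each slips past the state[i][j] scan just before the helping obligation would kick in; the formal argument depends on each thread being sequential and on the atomicity of the state[i][j].result read.
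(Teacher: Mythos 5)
Your core argument is the same one the paper uses: its proof of this lemma is simply ``the same idea as Lemma~\ref{theorem:prop1},'' i.e., the loop can only fail to converge because of \emph{global\_era} updates, and the only threads that can bump the era without first helping slot $(i,j)$ are the at most MAX\_THREADS threads already caught between the scan of \emph{state} (Line~32, Figure~\ref{fig:algwfdretire}) and the FAA on the era clock (Line~35). That part of your proposal is correct and matches the paper.

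What you do not address is the one thing the paper's proof adds specifically for this lemma: the loop must be confined to a \emph{single} slow-path cycle of slot $(i,j)$, and this is precisely why Line~106 compares the tag component of \emph{result} rather than just the value. Your termination discussion only considers the current cycle ending (your own WCAS succeeding, or a concurrent helper publishing). But after the helped thread completes its slow path, it may re-enter \verb|slow_path()| on the same index and re-advertise \verb|state[i][j].result| with \verb|invptr| again under a new tag. If the exit test depended only on the value, your invariant that \verb|result.V| stays \verb|invptr| would be re-established in the new cycle, the race-window accounting would restart, and the iteration count would be unbounded across cycles; the MAX\_THREADS blame argument you give silently assumes this cannot happen. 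The full-struct comparison you quote at Line~106 does rule it out (the tag has advanced), but you never identify the hazard or explain that the tag is the load-bearing part of that comparison. A smaller imprecision: the successful WCAS in the inner conditional is not the only write that can change \verb|state[i][j].result| --- the helped thread's own WCAS in \verb|slow_path()| and other concurrent helpers can also publish --- though, as you note later, those cases are caught by the Line~106 test.
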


\begin{proof}
The same idea as in Lemma~\ref{theorem:prop1}. We also need to make sure that the loop will not go beyond one slow path cycle. This is achieved by comparing the tag component in Line~106.
\end{proof}

\begin{theorem}
\verb|retire()| is wait-free bounded.
\end{theorem}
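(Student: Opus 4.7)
The plan is to walk through every statement of \verb|retire()| and \verb|try_retire()| and confirm that each loop has a bounded iteration count and each primitive is wait-free by our hardware assumptions (SWAP, FAA, WCAS are all wait-free). First, \verb|retire()| itself consists of a few constant-time field assignments and an increment of \verb|batch.counter|; it invokes \verb|try_retire()| only when the counter reaches the threshold MAX\_THREADS$\times$MAX\_IDX+1, and returns immediately otherwise. So the entire analysis reduces to bounding one invocation of \verb|try_retire()|.

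Next, I would bound the initial scan in \verb|try_retire()| (the two nested for loops over reservations in Figure~\ref{fig:algwfdretire}): they iterate over MAX\_THREADS $\times$ (MAX\_IDX+2) indices, each performing only plain reads and pointer updates to \verb|last| and \verb|slot|. This contributes O(MAX\_THREADS $\times$ MAX\_IDX) work and uses no blocking loops. The main retirement loop (walking nodes from \verb|batch.first| to \verb|last|) has an iteration count bounded by the batch size, which by construction is at most MAX\_THREADS$\times$MAX\_IDX+1. Within each iteration I would argue wait-freedom branch by branch: the SWAP on the list head is wait-free by assumption; the rollback branch performs a single CAS (not a loop); the tainted-tail branch performs one CAS on \verb|curr->next| followed by a call to \verb|traverse()| on the chopped tail, which is wait-free bounded by Lemma~\ref{lemma:traverse} (since that tail is a suffix of the list the owning thread would otherwise have traversed, and Lemma~\ref{lemma:traverse}'s argument — bounding by the number of batches attached at overlapping eras — applies verbatim).

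Finally, the closing FAA on the batch's reference counter is wait-free, and the conditional call to \verb|free_batch()| walks the batch via \verb|bnext|, performing constant work per node over at most MAX\_THREADS$\times$MAX\_IDX+1 nodes. Summing the contributions gives a finite wait-free bound for one \verb|retire()| call of roughly O(MAX\_THREADS$\times$MAX\_IDX) reservation scans plus the batch-sized retirement loop plus (in the worst case) one \verb|traverse()| per slot, each bounded by Lemma~\ref{lemma:traverse}.

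The main obstacle I anticipate is the justification for the tainted-tail \verb|traverse()|: Lemma~\ref{lemma:traverse} is phrased for the list returned by a reservation swap, whereas here \verb|try_retire()| hands \verb|traverse()| a sub-list previously detached by a concurrent \verb|traverse()| via list tainting. I would address this by noting that the tainted tail is, by construction, exactly the suffix that the owning thread stopped at when it SWAPped \verb|curr->next| to \verb|invptr|; the total number of retired-batch nodes ever chained through that reservation is finite (eras are incremented inside \verb|alloc_node()|, and only batches whose minimum birth era overlaps the reservation's era are attached in \verb|try_retire()|), so the same finiteness argument as in Lemma~\ref{lemma:traverse} governs this walk. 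No concurrent thread can extend the tail once it is chopped off, so \verb|traverse()| on it terminates in a bounded number of steps.
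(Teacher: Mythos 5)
Your proposal follows essentially the same route as the paper's proof: bound the retirement loop by the batch size, observe that the CAS outcomes are irrelevant to the iteration count, and discharge the \verb|traverse()| call on the chopped tail via Lemma~\ref{lemma:traverse}; your extra care about why that lemma applies to a tainted tail is a welcome elaboration of a step the paper states without comment. One small factual slip: with dynamic batches, \verb|try_retire()| is invoked every RETIRE\_FREQ calls to \verb|retire()| rather than only at the MAX\_THREADS$\times$MAX\_IDX$+1$ threshold (which is why the paper allows up to RETIRE\_FREQ$-1$ extra nodes in a batch), but this does not affect the boundedness argument.
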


\begin{proof}
\verb|retire()| periodically calls \verb|try_retire()|.
The loop in Lines~55-67, Figure~\ref{fig:algwfdretire}, is bounded by the number of nodes in a batch. (MAX\_THREADS$\times$MAX\_IDX+1 at most). 
Extra RETIRE\_FREQ-1 nodes can be retired since
\verb|retire()| calls \verb|try_retire()| with
the corresponding frequency.
Regardless of the status of the CAS operations (Lines~61~and~64), the loop
moves on to the next node. The \verb|traverse()| call (Line~65) is bounded due to Lemma~\ref{lemma:traverse}.
\end{proof}

\FloatBarrier

\begin{theorem}
\verb|alloc_node()| is wait-free bounded.
\end{theorem}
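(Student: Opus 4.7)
The plan is to reduce the claim to bounds already established in the preceding lemmas. Observe that aside from the periodic call to \verb|increment_era()|, the body of \verb|alloc_node()| consists of a counter increment, a call to \verb|malloc()| (assumed wait-free), and two field assignments, all of which execute in a constant number of steps. Hence it suffices to show that each invocation of \verb|increment_era()| terminates in a bounded number of steps.

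Next I would analyse \verb|increment_era()| itself. Its outer double loop runs over at most $\text{MAX\_THREADS} \times \text{MAX\_IDX}$ entries, a fixed constant, and the body either skips the entry or calls \verb|help_thread(i,j)| followed by a single wait-free FAA on \emph{global\_era}. So the remaining obligation is to argue that each \verb|help_thread(i,j)| call is wait-free bounded. For this I would appeal directly to Lemma~\ref{theorem:prop2}, which caps the main outer loop of \verb|help_thread()| (Lines~74--106) at MAX\_THREADS iterations. Each iteration of that loop performs only a constant number of atomic primitives (SWAP, WCAS, FAA) — all wait-free by the hardware assumptions stated in Section~\ref{sec:atomics} — together with, in the success branch, one call each to \verb|detach_nodes()| (bounded by MAX\_THREADS via Lemma~\ref{theorem:prop3}), to the tag+1$\rightarrow$tag+2 list-update loop (bounded by MAX\_THREADS via Lemma~\ref{theorem:prop4}), and to \verb|traverse()| (wait-free bounded by Lemma~\ref{lemma:traverse}). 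Composing these bounds yields a bound of the form $O(\text{MAX\_THREADS})$ atomic operations per \verb|help_thread()| call.

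Putting the pieces together, a single \verb|increment_era()| call performs at most $O(\text{MAX\_THREADS}^2 \times \text{MAX\_IDX})$ atomic operations, and since \verb|alloc_node()| invokes it at most once every EPOCH\_FREQ allocations, each \verb|alloc_node()| call is bounded by a constant depending only on the system parameters MAX\_THREADS, MAX\_IDX and EPOCH\_FREQ — independent of the data structure or runtime state. This gives wait-free boundedness.

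The main obstacle is not any new combinatorial argument but rather carefully tracing the nested helping structure to confirm that every loop encountered transitively from \verb|alloc_node()| has already been bounded. In particular one must verify that the two internal WCAS loops inside \verb|help_thread()| (the era-update loop at Lines~83--87 and the list-update loop at Lines~92--99) are both covered: the former terminates in at most two iterations by inspection (the tag component can move away from $\text{tag}+1$ at most once), and the latter is exactly the loop handled by Lemma~\ref{theorem:prop4}. Once these are accounted for, the bound on \verb|alloc_node()| follows immediately.
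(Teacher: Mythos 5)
Your proposal is correct and follows essentially the same route as the paper: reduce \verb|alloc_node()| to \verb|increment_era()|, note that its double loop over threads and indices is bounded, and then bound each \verb|help_thread()| call via Lemmas~\ref{theorem:prop3}, \ref{theorem:prop4}, and \ref{theorem:prop2}. Your version is somewhat more thorough than the paper's (you also explicitly account for the \verb|traverse()| calls via Lemma~\ref{lemma:traverse} and the two-iteration era-update loop), but the decomposition and the key lemmas invoked are the same.
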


\begin{proof}
\verb|alloc_node()| calls \verb|increment_era()|. The latter includes a bounded loop with calls to \verb|help_thread()|. Finally,
\verb|help_thread()| is bounded due to Lemmas~\ref{theorem:prop3},~\ref{theorem:prop4},~\ref{theorem:prop2}.
\end{proof}

\begin{theorem}
\verb|protect()| is wait-free bounded.
\end{theorem}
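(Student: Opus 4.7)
The plan is to decompose \verb|protect()| (Figure~\ref{fig:algwfdcapi}, Lines~32--41) into its two phases and bound each separately. First I would observe that the fast-path while-loop is explicitly bounded by \texttt{MAX\_TRIES} iterations by the \verb|--tries| decrement, and each iteration performs only constant-time atomic loads plus a single call to \verb|update_era()|. Since \verb|update_era()| (Figure~\ref{fig:algdc}) does one SWAP, one global-era read, and at most one \verb|traverse()| call, it is wait-free bounded by Lemma~\ref{lemma:traverse}. Hence the fast path contributes an $O(\text{MAX\_TRIES})$ step bound.

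Second, I would argue that when the fast path falls through, \verb|protect()| makes a single tail call to \verb|slow_path()|, so it suffices to show \verb|slow_path()| (Figure~\ref{fig:algwfdc2}, Lines~1--58) is wait-free bounded. I would walk through each unbounded construct in \verb|slow_path()| in the order executed: (i)~the main \verb|do--while| loop at Lines~13--37 is bounded by \texttt{MAX\_THREADS} iterations by Lemma~\ref{theorem:prop1}; (ii)~the call to \verb|detach_nodes()| on Line~38 is bounded by \texttt{MAX\_THREADS} by Lemma~\ref{theorem:prop3}; (iii)~the two \verb|traverse()| calls on Lines~48 and~54 are wait-free bounded by Lemma~\ref{lemma:traverse}; and (iv)~\verb|handover_parent()| contains only an explicit \texttt{MAX\_THREADS}-bounded \verb|for| loop over \emph{parents} plus constant-time FAA/CAS operations. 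Every other step in \verb|slow_path()| is straight-line code using constant-time atomics. Summing these gives a wait-free bound that depends polynomially on \texttt{MAX\_THREADS}, \texttt{MAX\_IDX}, and the batch-size constant.

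Finally I would combine the two phases: the total number of steps of \verb|protect()| is bounded by the fast-path cost plus the worst-case \verb|slow_path()| cost, both of which are finite, giving the claim. The only subtle point --- and the main obstacle --- is that \verb|update_era()|, \verb|traverse()|, and the SWAP-based list manipulations reach into shared state that other threads can modify, so care is needed to confirm that the bounds from Lemmas~\ref{lemma:traverse} and~\ref{theorem:prop1} apply exactly to the invocations used inside \verb|protect()|; in particular, the \verb|traverse()| call inside \verb|update_era()| on the fast path still corresponds to the reservation's list at index \verb|index|, whose length is bounded by the same argument used in Lemma~\ref{lemma:traverse}. Once this is checked, no further machinery is needed.
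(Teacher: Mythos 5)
Your proposal is correct and follows essentially the same route as the paper's proof: both decompose \verb|protect()| into the \texttt{MAX\_TRIES}-bounded fast path (whose only non-trivial cost is \verb|traverse()| via \verb|update_era()|, bounded by Lemma~\ref{lemma:traverse}) and the slow path, whose loop, \verb|detach_nodes()| call, and \verb|traverse()| calls are bounded by Lemmas~\ref{theorem:prop1}, \ref{theorem:prop3}, and~\ref{lemma:traverse} respectively. Your additional remarks on \verb|handover_parent()| and on checking that the lemmas apply to the specific invocations are sensible elaborations but do not change the argument.
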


\begin{proof}
The fast path includes a finite number of iterations. It may call \verb|update_era()|, which calls \verb|traverse()|.  \verb|traverse()| is bounded due to Lemma~\ref{lemma:traverse}. \verb|slow_path()| contains a loop which is bounded due to Lemma~\ref{theorem:prop1}. \verb|slow_path()| also calls \verb|detach_nodes()|, which is bounded due to Lemma~\ref{theorem:prop3}. Finally, \verb|slow_path()| can call \verb|traverse()|, which is bounded due to Lemma~\ref{lemma:traverse}.
\end{proof}

\begin{theorem}
\verb|clear()| is wait-free bounded.
\end{theorem}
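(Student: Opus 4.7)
The plan is to argue that \verb|clear()| consists of a single bounded outer loop whose body executes only operations already shown to be wait-free bounded. Structurally, \verb|clear()| iterates over each reservation index $i \in \{0, \ldots, \text{MAX\_IDX}-1\}$ for the current thread, performs a SWAP that atomically exchanges the reservation's list pointer with \verb|invptr|, and, if the retrieved value is not \verb|invptr|, hands it to \verb|traverse()|. So the first step is simply to count iterations: the outer loop runs exactly MAX\_IDX times, a compile-time constant.

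Next I would dispatch each operation inside the loop body. The SWAP is assumed to be wait-free in hardware (stated as an assumption in Section~\ref{sec:atomics} and in the Crystalline-W assumptions), so it completes in a bounded number of steps. The conditional call to \verb|traverse()| is wait-free bounded by Lemma~\ref{lemma:traverse}, since the list handed to it is exactly the detached list of retired batches associated with that reservation, whose length is bounded by the era-driven argument used there. Multiplying a constant bound per iteration by the MAX\_IDX iterations of the outer loop gives a finite total step bound, establishing wait-freedom.

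There is essentially no hard part here: the theorem is a direct corollary of Lemma~\ref{lemma:traverse} combined with the hardware wait-freedom of SWAP, and the only minor subtlety to flag is that for Crystalline-W the SWAP is done on the value component of the tagged reservation list (so that concurrent \verb|try_retire()| and \verb|detach_nodes()| interactions do not introduce any additional unbounded waiting), but this does not affect the bound since the swap itself is still a single wait-free primitive and the detached list remains a finite retired-batch list subject to Lemma~\ref{lemma:traverse}.
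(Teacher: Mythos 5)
Your proposal is correct and follows essentially the same route as the paper's proof: the paper likewise observes that \verb|clear()| is a bounded loop (over the MAX\_IDX reservation indices) whose only nontrivial work is the call to \verb|traverse()|, which is bounded by Lemma~\ref{lemma:traverse}. The extra details you supply (hardware wait-freedom of SWAP, the tagged value component in Crystalline-W) are consistent with the paper and do not change the argument.
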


\begin{proof}
The method has a bounded loop which can call
 \verb|traverse()|, which is bounded due to Lemma~\ref{lemma:traverse}.
\end{proof}

\section{Evaluation}

\label{sec:evalution}

We evaluated the Crystalline schemes from 1 to 192 threads on a
96-core machine consisting of four Intel Xeon E7-8890~v4 2.20~GHz
CPUs, 256GB of RAM. (HyperThreading is OFF for more reliable measurements.)
Threads are pinned in order, socket by socket. We
used clang 9.0.1 with the \verb|-O3| optimizations. (We also ran tests with gcc 9.2.1, but found that clang performs marginally better for
\textit{all} reclamation schemes due to its better code optimization for
C++11 atomics.) Similar to~\cite{WFE,IBRPaper}, we
used jemalloc~\cite{jemalloc} due to its better performance.

We implemented Crystalline-L/-W reclamation schemes in C++11 and integrated them into the benchmark in~\cite{IBRPaper}. We incorporated  additional tests from~\cite{WFE}. We measured the performance of our schemes and compared them against existing well-known or closely related schemes (Figure~\ref{fig:legendsmr}). We skipped: (i) OS-based approaches since they are inevitably blocking; (ii) PEBR~\cite{PEBR} and lock-free garbage collectors due to significant API differences and lack of performance benefits~\cite{hyalineFULL} compared to Hyaline, and consequently -- Crystalline.

The original benchmark code we used~\cite{WFE,IBRPaper} was suboptimal for HP, HE, and WFE due to excessive cache misses when scanning lists of retired nodes. As in~\cite{hyalineFULL}, we take per-thread snapshots of hazard
pointers (or eras) before scanning the list. This improves performance greatly,
especially for HP. A similar optimization already existed for IBR. Finally, EBR, Hyaline-1/-1S, and Crystalline-L/-W are snapshot-free, i.e., do not need this optimization.

Data structures implement abstract key-value interfaces such as
\verb|insert()|-and-\verb|delete()|, \verb|push()|-and-\verb|pop()|, or $ $
\verb|get()|-and-\verb|put()|.
For each data point, the benchmark in~\cite{IBRPaper}
prefills the data structure with 50,000 elements and runs 10 seconds.
Each thread then randomly chooses the corresponding abstract operation.
The key for each operation is randomly chosen from the
range (0, 100,000). We run the experiment 5 times and report the average.
For both Crystalline-W and WFE, the fast path threshold (MAX\_TRIES) is 16.

\begin{figure}
\begin{minipage}{.5\textwidth}
\resizebox{\textwidth}{!}{
		\setlength\tabcolsep{1.5pt}
  \begin{tabular}{ll} \hline
		  \textbf{None} & no reclamation (leaking memory) \\
			\textbf{Hyaline-1} & (non-robust) Hyaline-1~\cite{Hyaline,hyalineFULL} \\
			\textbf{Hyaline-1S} & (robust) Hyaline-1S~\cite{Hyaline,hyalineFULL} \\
			\textbf{HP} & the hazard pointers scheme~\cite{HPPaper} \\
			\textbf{HE} & the hazard eras scheme~\cite{HEPaper} \\
			\textbf{IBR} & 2GEIBR (interval-based)~\cite{IBRPaper} \\
			\textbf{WFE} & the wait-free eras scheme~\cite{WFE} \\
			\textbf{EBR} & epoch-based reclamation\\ \hline
  \end{tabular}
}
  \caption{Evaluated reclamation schemes.}
   \label{fig:legendsmr}
\end{minipage}%
\begin{minipage}{.5\textwidth}
\begin{subfigure}{.5\textwidth}
\includegraphics[width=.99\textwidth]{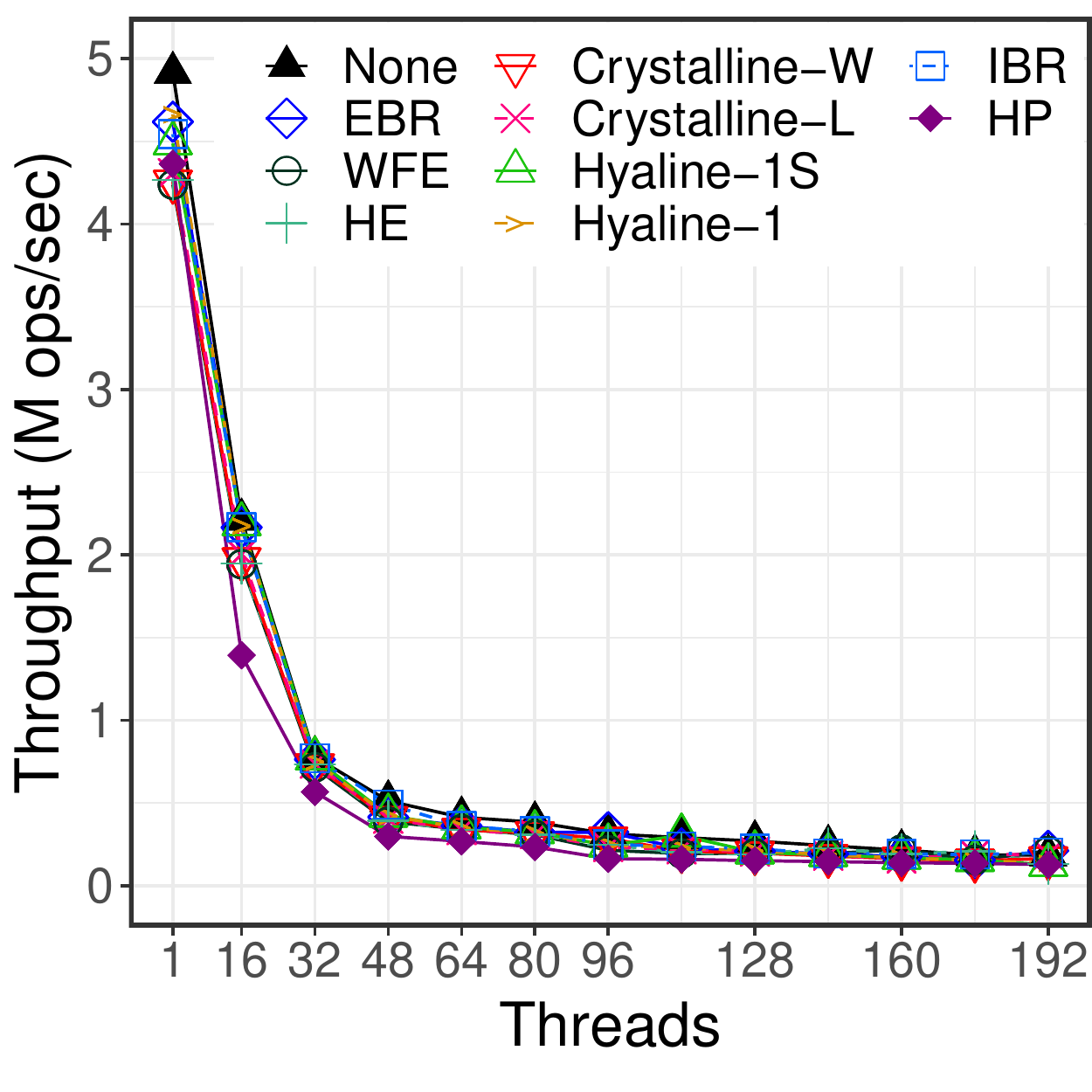}
\vspace{-15pt}
\caption{Throughput}
\label{fig:crturn_throughput}
\end{subfigure}%
\begin{subfigure}{.5\textwidth}
\includegraphics[width=.99\textwidth]{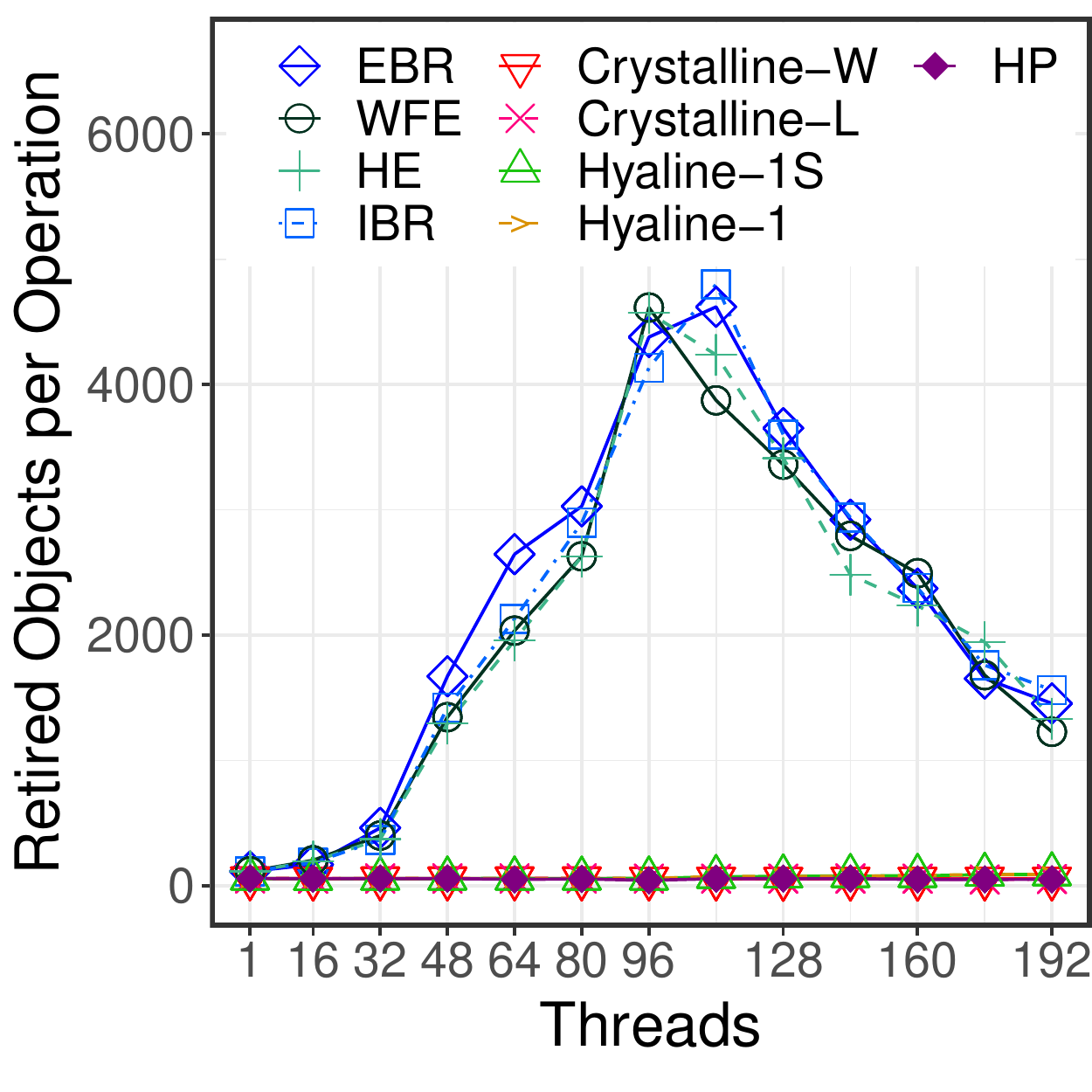}
\vspace{-15pt}
\caption{Retired objs}
\label{fig:crturn_retired}
\end{subfigure}%
\vspace{-5pt}
\caption{Wait-free CRTurnQueue.}
\label{fig:crturn}
\end{minipage}%
\end{figure}

Benchmark parameters must be properly tuned for a fair comparison.
We found that the default parameters used in~\cite{WFE,IBRPaper} are not
optimal for our 96-core machine even for existing schemes, so we adjusted them.
We found that \verb|epochf|=110 and \verb|emptyf|=120, as used by the benchmark, appear to be
optimal for \emph{all} existing schemes because they attain the best possible throughput with good memory efficiency.
The same parameters are also optimal for all Crystalline schemes. Therefore, all schemes were tested with identical parameters. (Note that \verb|emptyf| is used as
RETIRE\_FREQ for \verb|try_retire()| in Crystalline-L/-W.)

We also slightly modified the way memory objects are
retired in the existing schemes. The original benchmark used an extra indirection
when retiring by allocating thread-local list nodes (they store pointers to retired objects), which creates circular allocator dependency.
We did not find this change to make any significant impact on the results for the existing schemes. However, we made this change for a fair comparison with Hyaline and Crystalline.

In the evaluation, we focus on common performance metrics including  throughput and memory reclamation efficiency.
Wait-free data structures are typically more difficult to implement and involve additional variables in their evaluation. Thus, we primarily focus on existing lock-free
data structures from the benchmark in~\cite{IBRPaper}: 
a hash map~\cite{HPPaper}, sorted list~\cite{HarrisList,HPPaper}, and Natarajan binary search tree~\cite{NatarajanTree}. 
However, we also evaluate Ramalhete and Correia~\cite{pedroWFQUEUE}'s CRTurnQueue, which was also previously evaluated with WFE~\cite{WFE}. CRTurnQueue is a wait-free queue which has bounded memory usage, unlike other queues such as WFQUEUE~\cite{ChaoranWFQ}, which has unbounded memory usage~\cite{pedroWFQUEUEFULL}.
CRTurnQueue is faster~\cite{WFE,pedroWFQUEUE} than other wait-free queues with bounded memory usage~\cite{kpWFQUEUE}.

We first ran a write-dominated workload (50\% of \verb|insert()| and 50\% of \verb|delete()| operations). This
workload substantially stresses memory reclamation algorithms. We found that Crystalline-L/-W generally outperform other schemes in both throughput and memory efficiency. Hyaline-1/-1S are often on par but can be worse than Crystalline-L/-W due to
larger granularity and lack of dynamic batches present in more advanced Crystalline schemes.
Though Crystalline's theoretical memory bound is worse than that of HE/WFE (Section~\ref{sec:correctness}), its practical efficiency is better than that of HE/WFE and is often comparable to HP's.

Retired objects are computed per an \emph{individual} node in Hyaline/Crystalline. The number of retired objects is \emph{averaged}
per an \emph{operation} and can even be lower than the batch size.

Figure~\ref{fig:crturn} presents CRTurnQueue results. The write-intensive workload is typical for a queue and guarantees that the queue will not grow indefinitely. Queues generally
do not scale that well, and throughput is almost identical (Figure~\ref{fig:crturn_throughput}) for all algorithms. Crystalline and Hyaline schemes show exceptional memory efficiency, which is on par with HP (Figure~\ref{fig:crturn_retired}).

For the sorted list, hazard pointers has the worst throughput (Figure~\ref{fig:list_throughput}), and WFE exhibits a slight
overhead which is discussed in~\cite{WFE}. All other schemes, except HP, achieve the maximum throughput.
All Hyaline and Crystalline schemes are very memory efficient (Figure~\ref{fig:list_retired}).
For the hash map, Crystalline-L/-W achieve superior throughput (Figure~\ref{fig:hashmap_throughput}), which is especially evident for oversubscribed scenarios, where the
gap with other algorithms is as large as 2x. HP's throughput is worse than that of Crystalline-L/-W. Hyaline-1/-1S's throughput is worse than that of
Crystalline, which can be explained by a smaller granularity of reservations. WFE has the worst throughput. Memory efficiency (Figure~\ref{fig:hashmap_retired}) of Crystalline-L/-W is superior to all algorithms except HP. Hyaline-1/-1S's efficiency is also great before oversubscription but greatly reduces afterwards.
A similar trend is observed for Natarajan tree (Figures~\ref{fig:natarajan_throughput}~and~\ref{fig:natarajan_retired}).
Crystalline-L/-W and Hyaline-1/-1S are more memory efficient before hitting the oversubscribed situation. In the oversubscribed situation, however, the throughput also remains higher than that of other algorithms which implies that the total number of allocated objects was higher in the first place.
As the throughput reduces to that of EBR (192 threads), the number of unreclaimed objects also evens out.

\begin{figure*}[ht]
\begin{subfigure}{.25\textwidth}
\includegraphics[width=.99\textwidth]{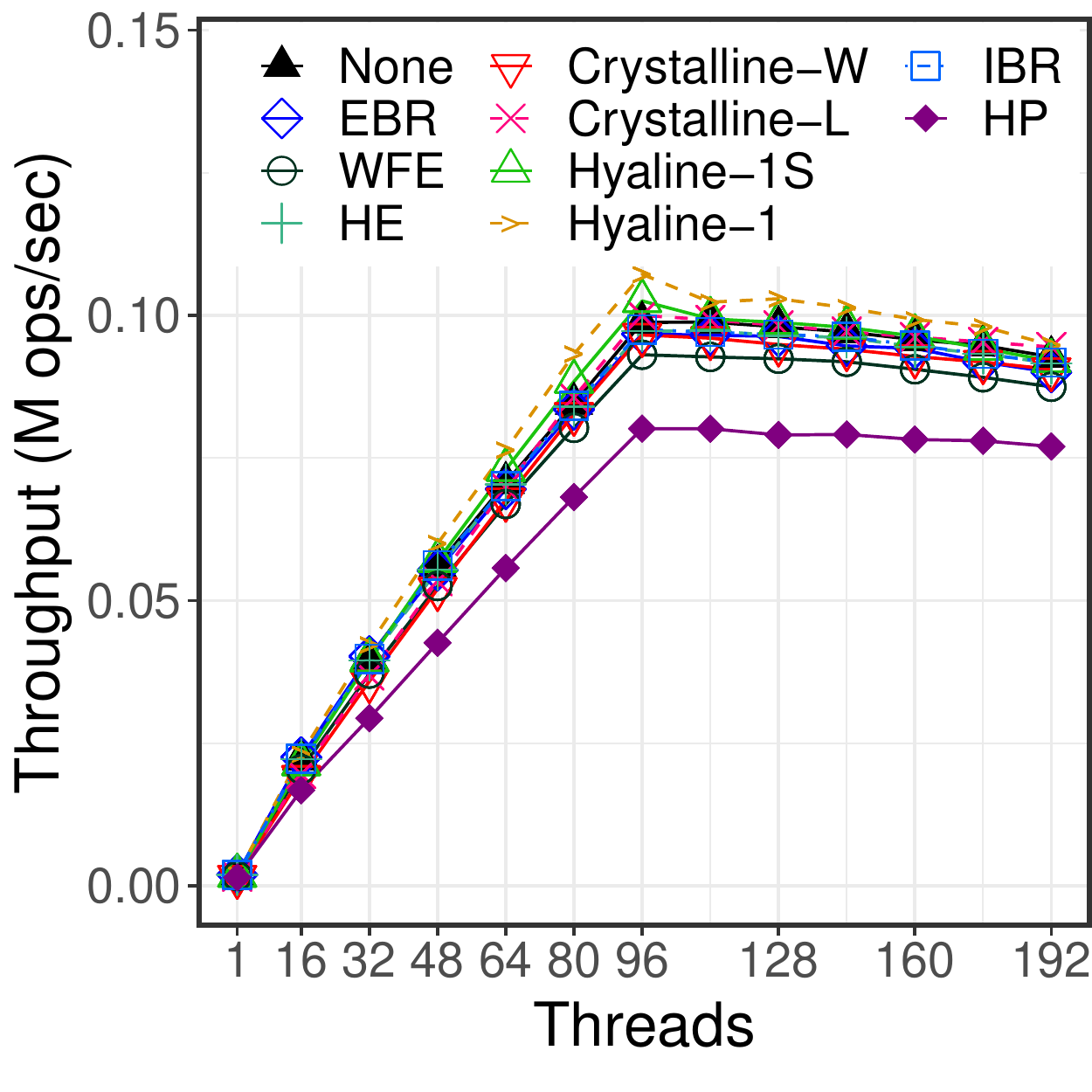}
\vspace{-15pt}
\caption{Throughput (write)}
\label{fig:list_throughput}
\end{subfigure}%
\begin{subfigure}{.25\textwidth}
\includegraphics[width=.99\textwidth]{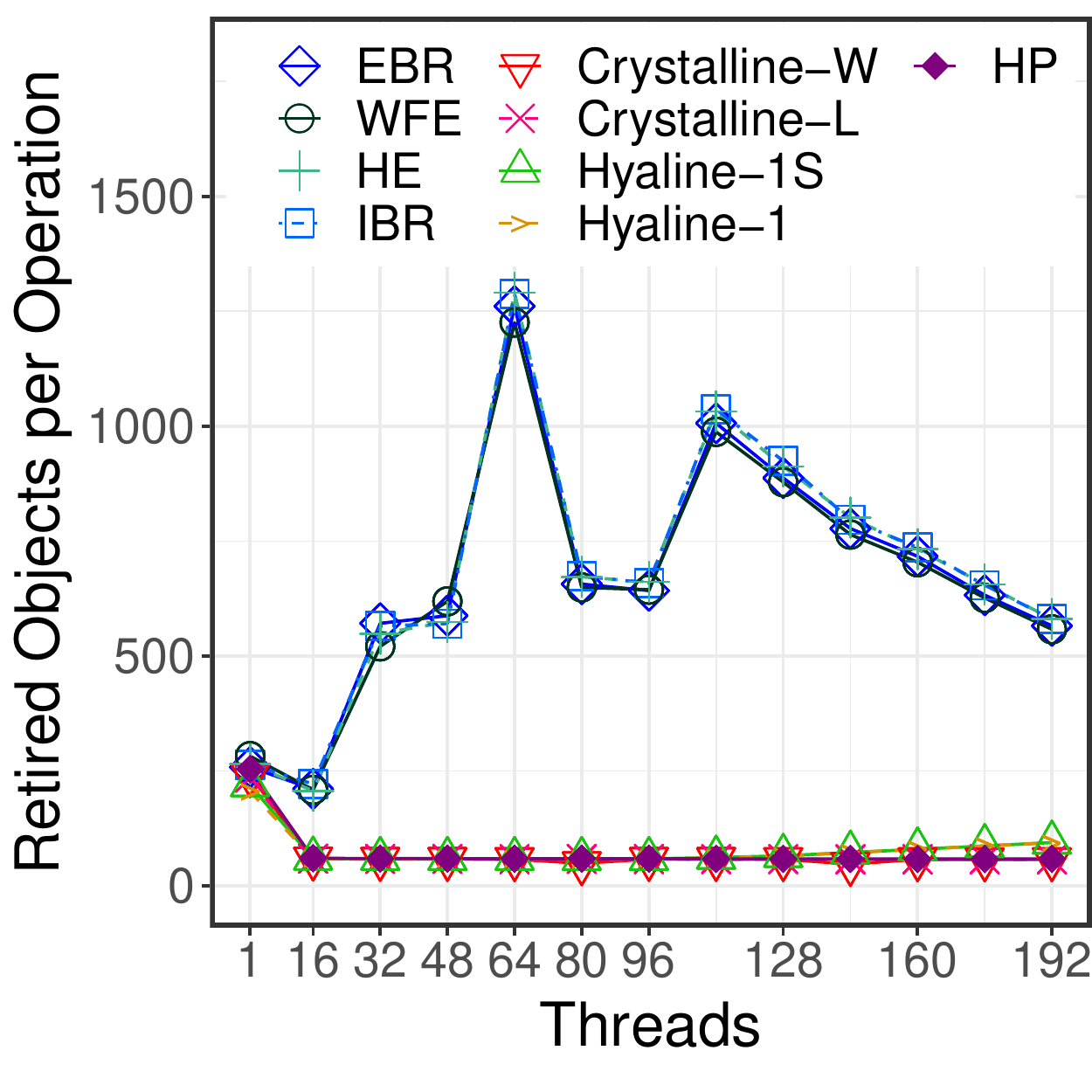}
\vspace{-15pt}
\caption{Retired objs (write)}
\label{fig:list_retired}
\end{subfigure}%
\begin{subfigure}{.25\textwidth}
\includegraphics[width=.99\textwidth]{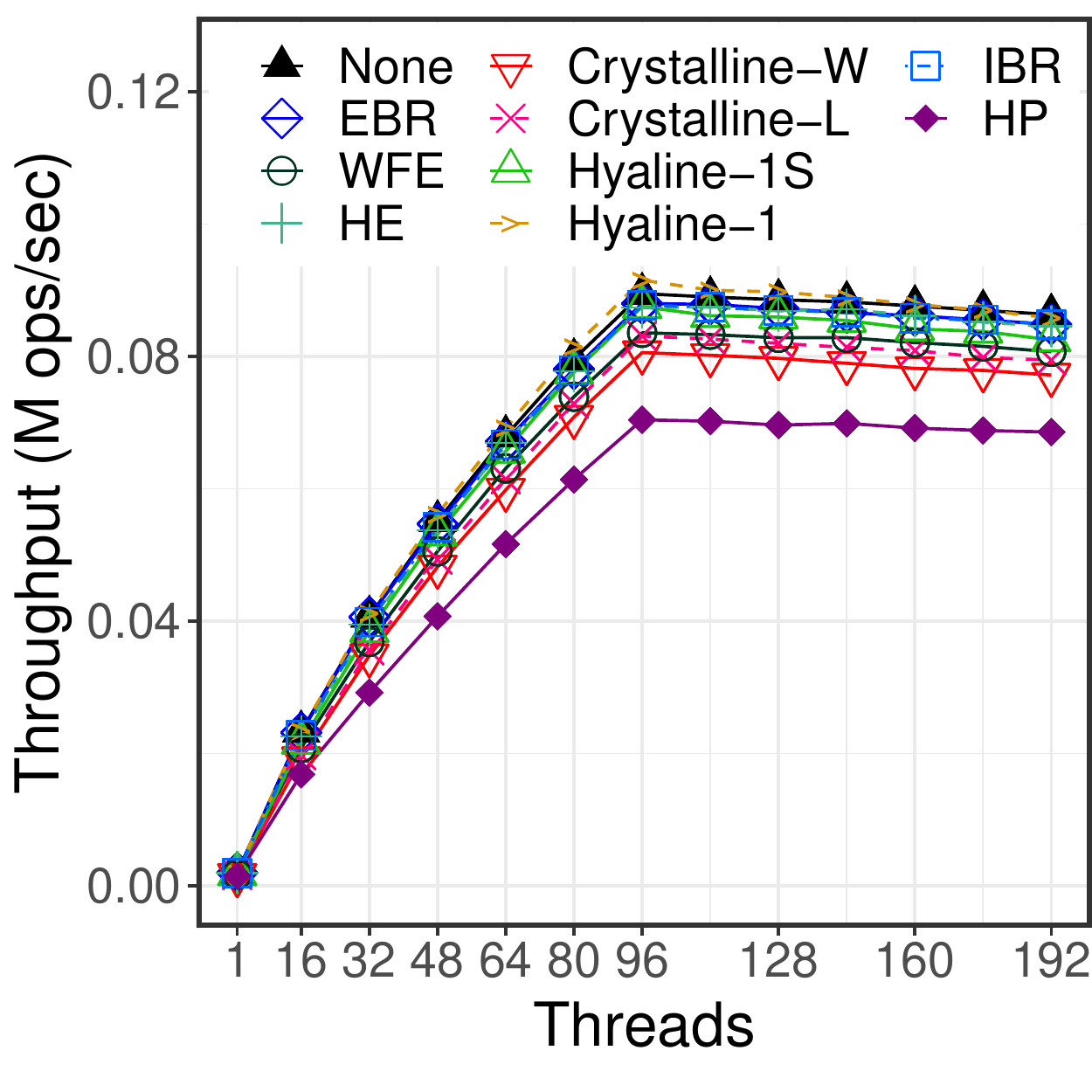}
\vspace{-15pt}
\caption{Throughput (read)}
\label{fig:list_throughput_read}
\end{subfigure}%
\begin{subfigure}{.25\textwidth}
\includegraphics[width=.99\textwidth]{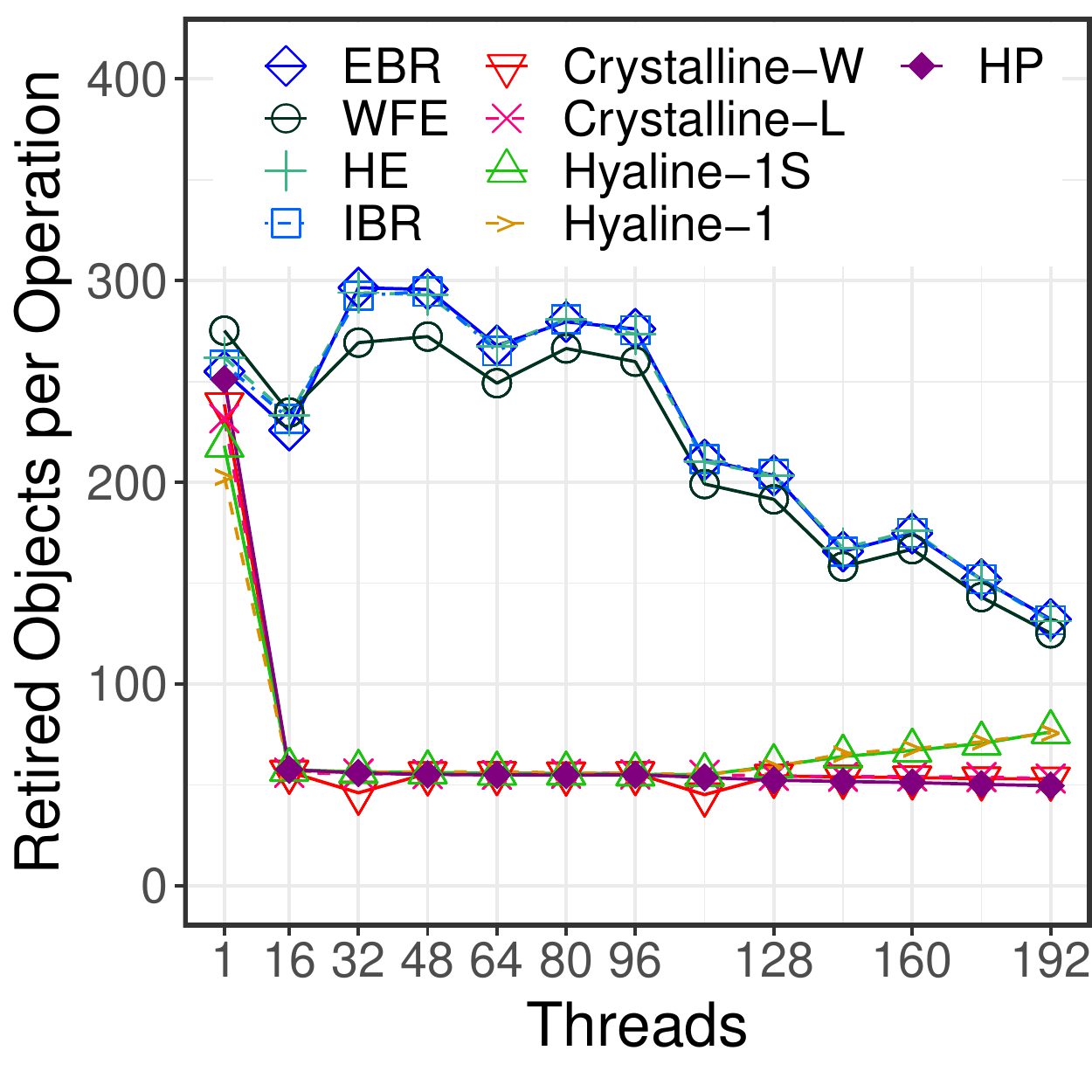}
\vspace{-15pt}
\caption{Retired objs (read)}
\label{fig:list_retired_read}
\end{subfigure}%
\vspace{-5pt}
\caption{Lock-free LinkedList.}
\label{fig:list}
\end{figure*}

\begin{figure*}[ht]
\begin{subfigure}{.25\textwidth}
\includegraphics[width=.99\textwidth]{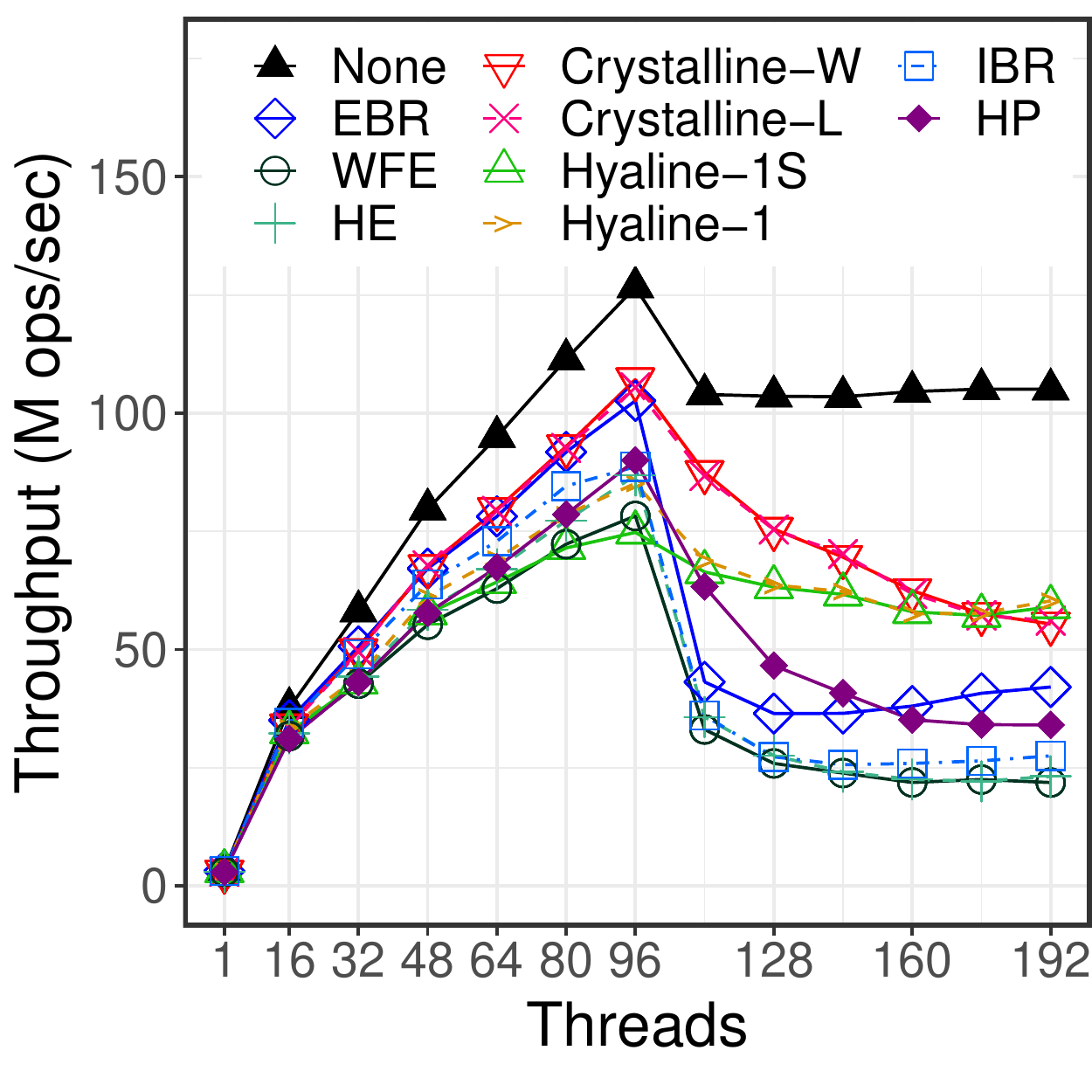}
\vspace{-15pt}
\caption{Throughput (write)}
\label{fig:hashmap_throughput}
\end{subfigure}%
\begin{subfigure}{.25\textwidth}
\includegraphics[width=.99\textwidth]{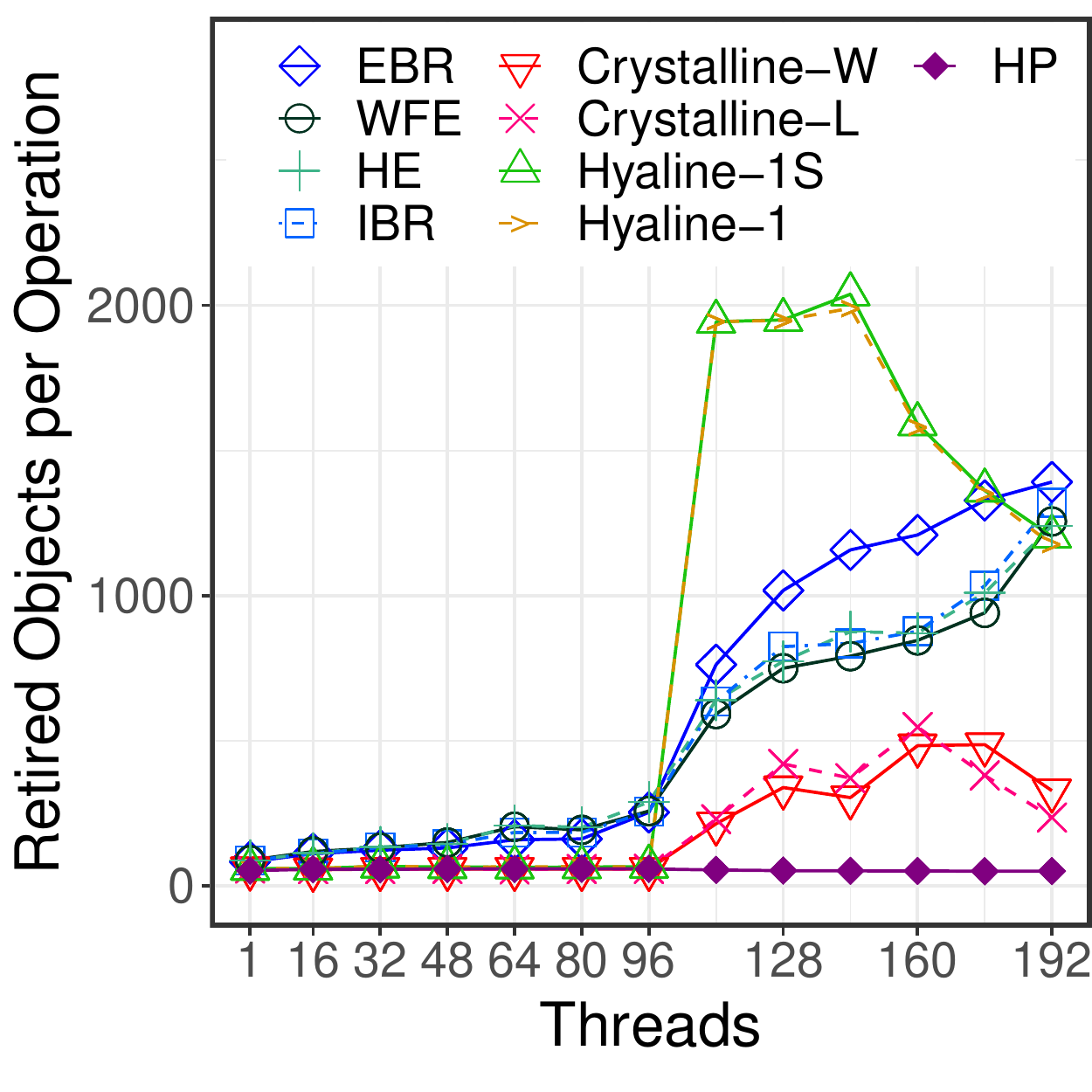}
\vspace{-15pt}
\caption{Retired objs (write)}
\label{fig:hashmap_retired}
\end{subfigure}%
\begin{subfigure}{.25\textwidth}
\includegraphics[width=.99\textwidth]{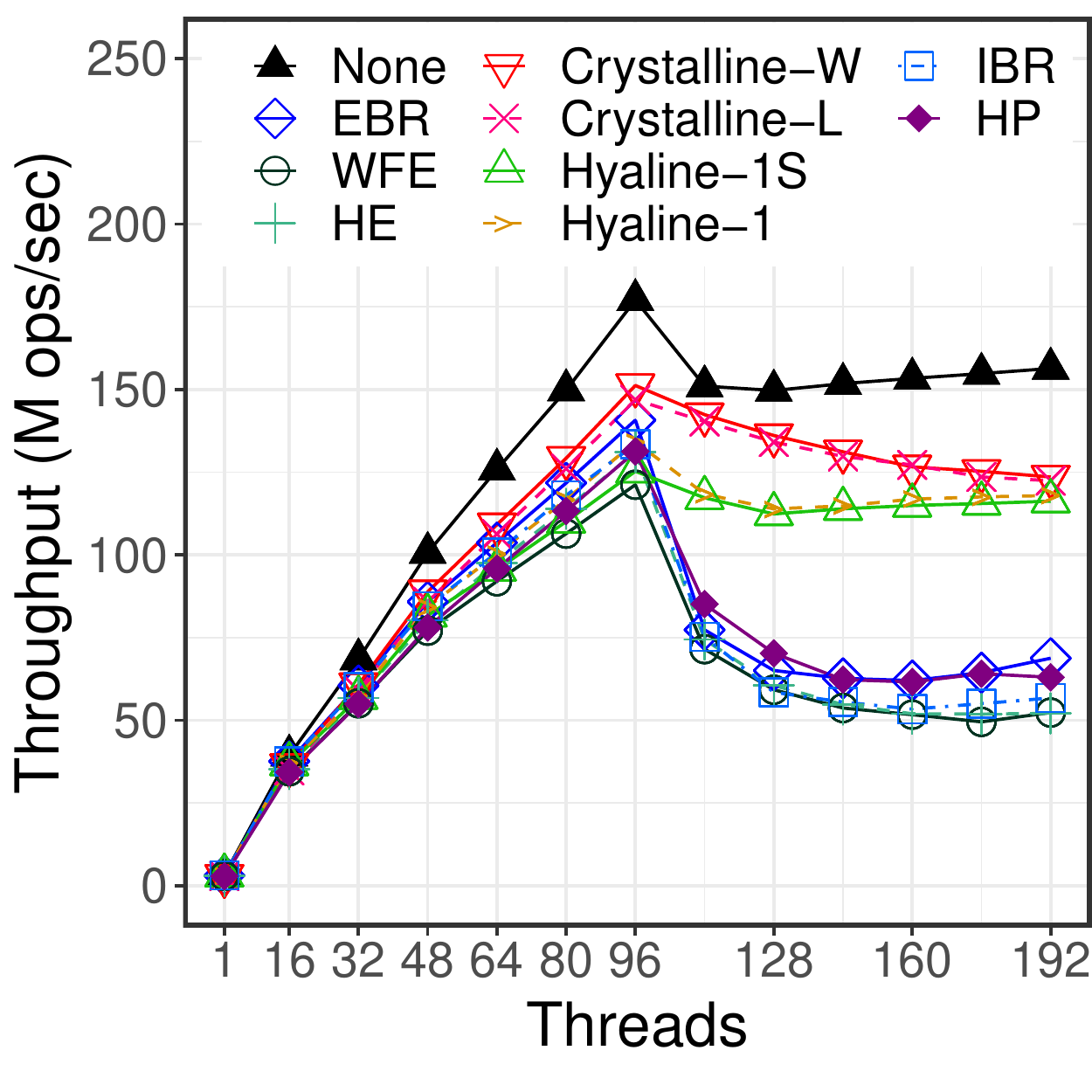}
\vspace{-15pt}
\caption{Throughput (read)}
\label{fig:hashmap_throughput_read}
\end{subfigure}%
\begin{subfigure}{.25\textwidth}
\includegraphics[width=.99\textwidth]{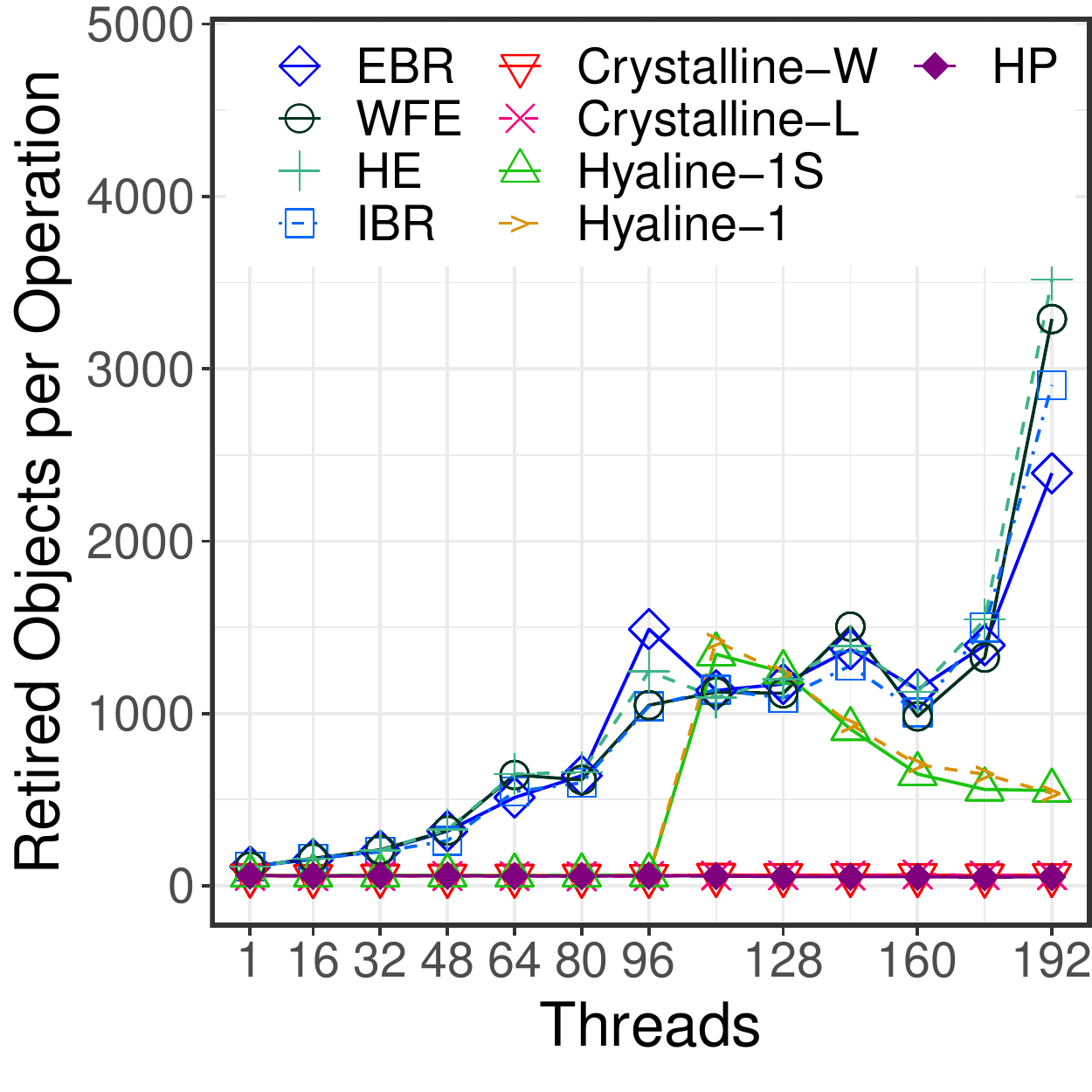}
\vspace{-15pt}
\caption{Retired objs (read)}
\label{fig:hashmap_retired_read}
\end{subfigure}%
\vspace{-5pt}
\caption{Lock-free HashMap.}
\label{fig:hashmap}
\end{figure*}

\begin{figure*}[ht]
\begin{subfigure}{.25\textwidth}
\includegraphics[width=.99\textwidth]{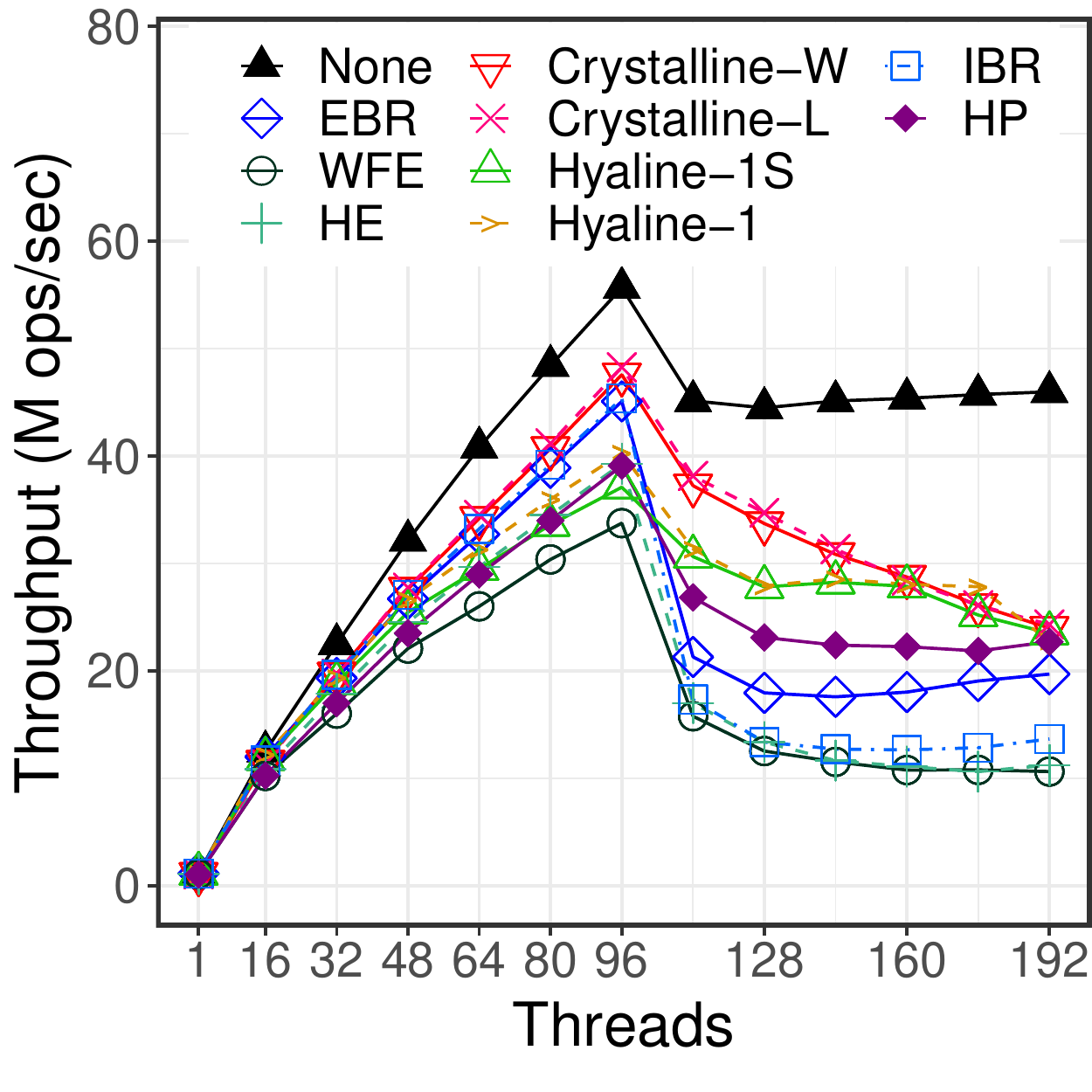}
\vspace{-15pt}
\caption{Throughput (write)}
\label{fig:natarajan_throughput}
\end{subfigure}%
\begin{subfigure}{.25\textwidth}
\includegraphics[width=.99\textwidth]{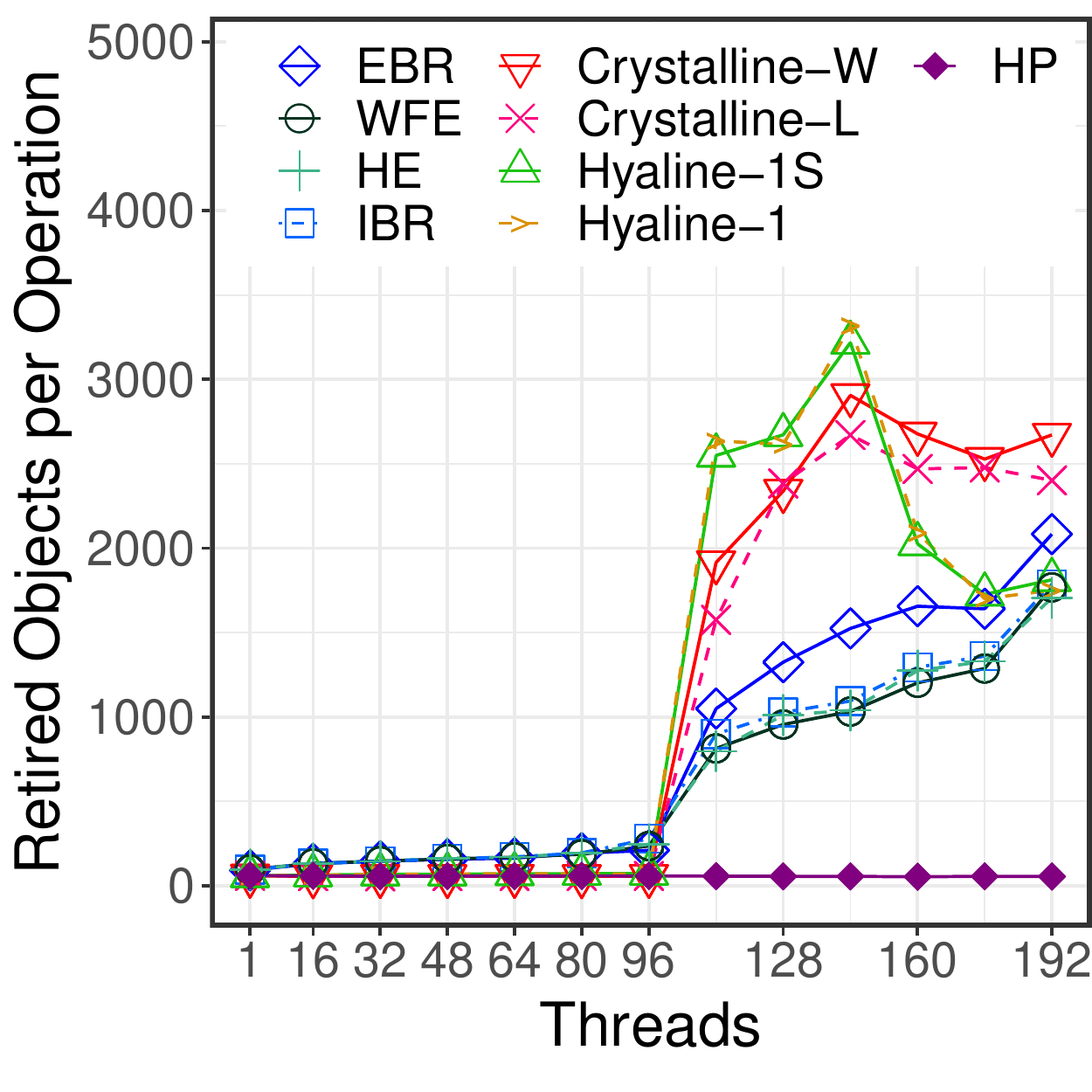}
\vspace{-15pt}
\caption{Retired objs (write)}
\label{fig:natarajan_retired}
\end{subfigure}%
\begin{subfigure}{.25\textwidth}
\includegraphics[width=.99\textwidth]{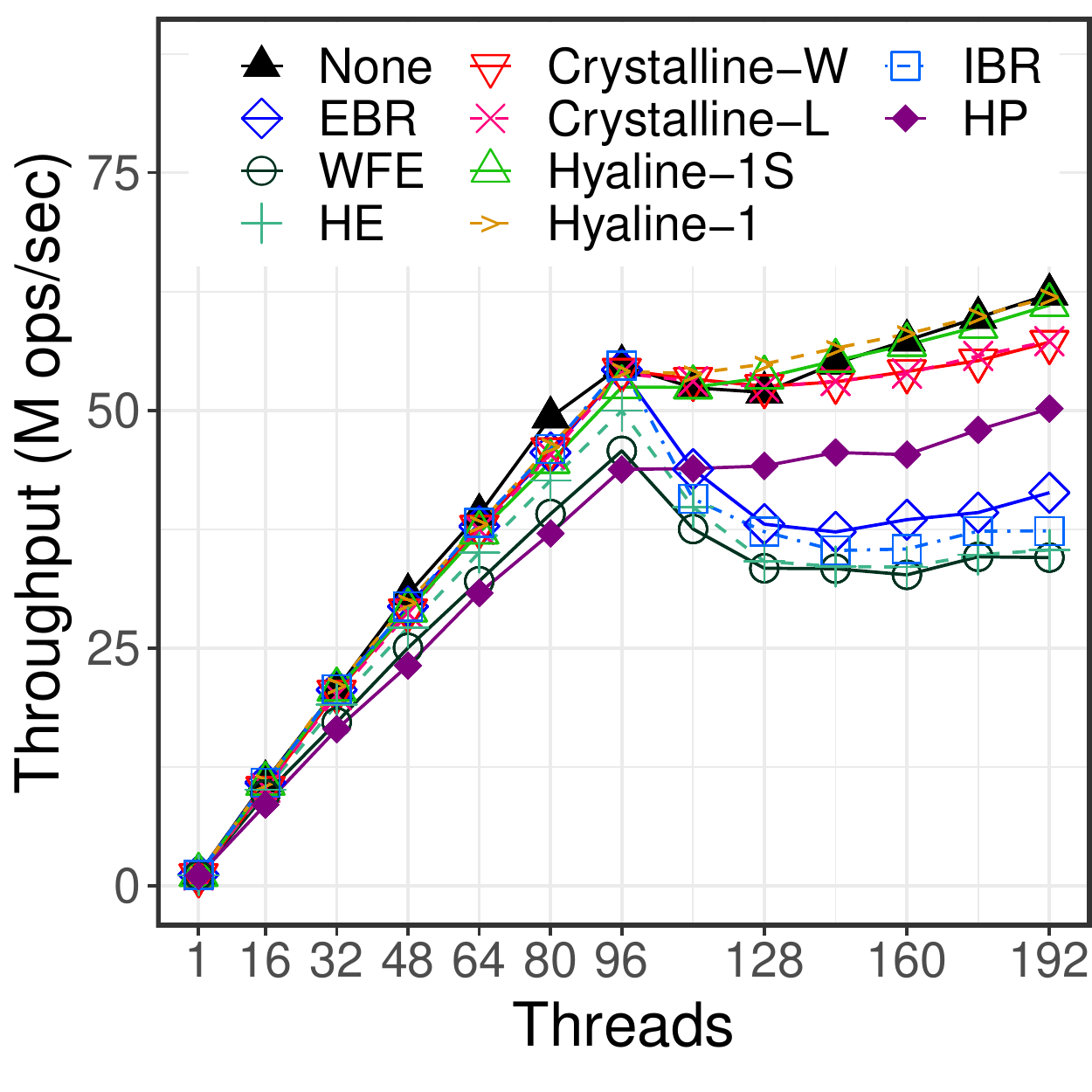}
\vspace{-15pt}
\caption{Throughput (read)}
\label{fig:natarajan_throughput_read}
\end{subfigure}%
\begin{subfigure}{.25\textwidth}
\includegraphics[width=.99\textwidth]{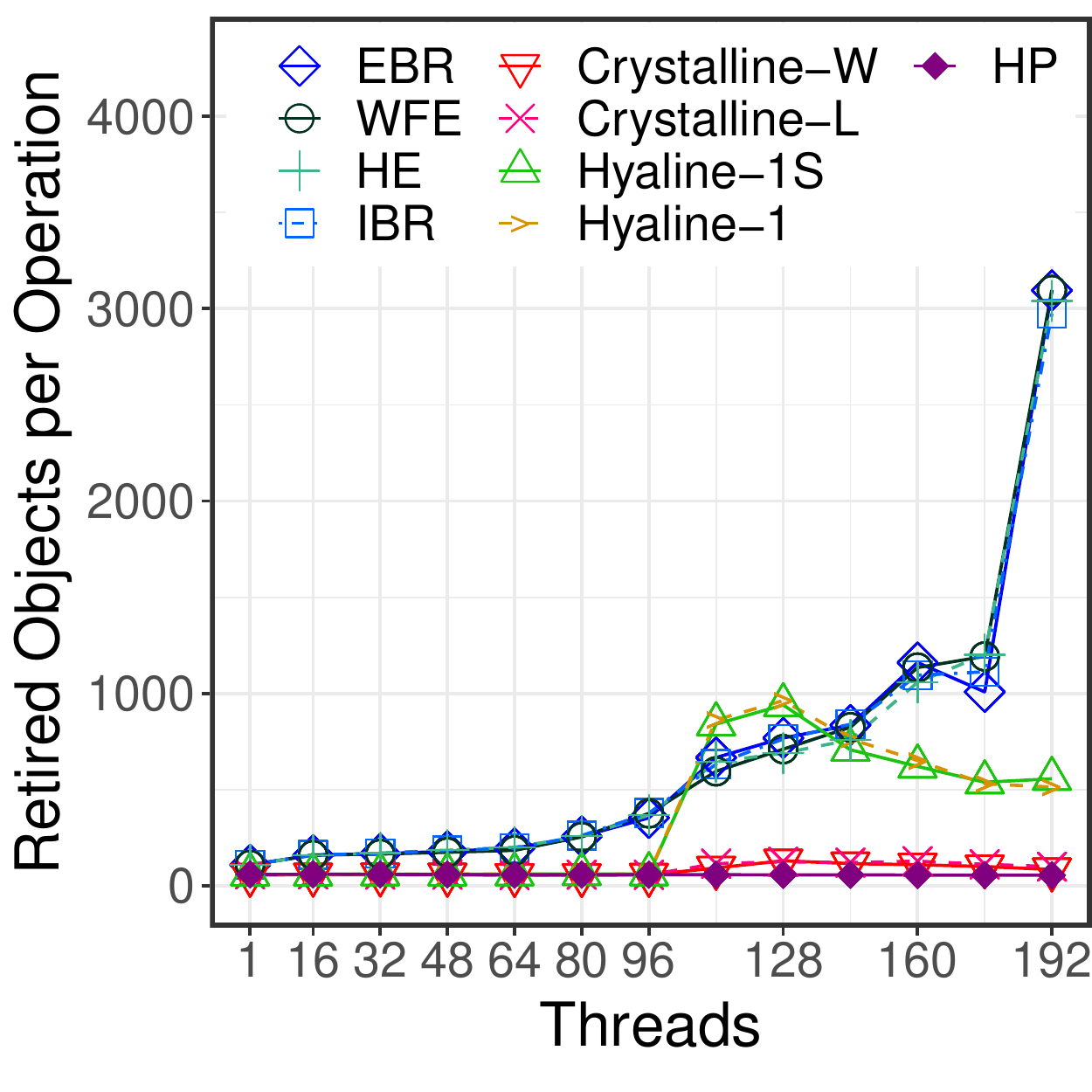}
\vspace{-15pt}
\caption{Retired objs (read)}
\label{fig:natarajan_retired_read}
\end{subfigure}%
\vspace{-5pt}
\caption{Lock-free NatarajanTree.}
\label{fig:natarajan}
\end{figure*}

The benefits of the Crystalline schemes are pronounced even stronger in a read-dominated workload (90\% of \verb|get()| and 90\% of \verb|put()| operations). This is partially due to a better balancing of the reclamation workload across all threads.
As in the write-dominated workload, the hash map (Figures~\ref{fig:hashmap_throughput_read},~\ref{fig:hashmap_retired_read}) and Natarajan
tree (Figures~\ref{fig:natarajan_throughput_read},~\ref{fig:natarajan_retired_read}) achieve the highest throughput, especially in oversubscribed scenarios. At the same time, Crystalline-L/-W achieve exceptional memory efficiency which is on par with hazard pointers. (Even Hyaline-1/1S are visibly less memory efficient.) We only observed a tiny overhead in the linked list shown in Figure~\ref{fig:list_throughput_read}. This overhead is similar to that of WFE~\cite{WFE} and is due to a higher register spillover when \verb|protect()| gets inlined into the code. In linked lists, many nodes need to be traversed, increasing the frequency of \verb|protect()| calls, especially in read-dominated workloads.
However, \verb|protect()| does not update eras that frequently, and a better optimization strategy would be to avoid a premature register spillover. Customized assembly code would likely remove this small overhead altogether.

Overall, the overhead of Crystalline-W (vs. Crystalline-L) is negligible. Both algorithms generally
outperform existing schemes in terms of throughput as well as memory efficiency, which make them appealing for many lock-free and wait-free data structures.
Furthermore, Crystalline-W visibly outperforms WFE, a prior wait-free scheme, even in
non-oversubscribed scenarios. Crystalline-W is also substantially more memory efficient than WFE.

\textbf{Snapshots.}
We disregard snapshot overheads, but for HP/IBR/HE snapshots can create more memory inefficiency than the schemes themselves if the number of threads is high. This makes Crystalline even more efficient than prior schemes.

\section{Related Work}

Existing memory reclamation techniques can be classified into the following
categories.

\textbf{Blocking techniques.}
EBR~\cite{epoch1} is one of the well known approaches, where a thread explicitly makes a
reservation. At the beginning of an operation, EBR records the global epoch value.
At the end, EBR resets the reservation. In quiescent-based reclamation (QSBR)~\cite{epoch2},
this is done automatically as threads go through a ``quiescent'' state.
Modifications to EBR also exist~\cite{Poter:2018:BAS:3210377.3210661} which
bound reclamation costs.
These techniques do not bound memory usage, and thus are blocking when memory is exhausted. Hyaline-1~\cite{Hyaline,hyalineFULL} implements a similar API but can be more memory efficient and faster.

\textbf{Robust techniques.}
To bound memory usage and improve usability, several techniques were developed that exploit OS support. As was pointed out in~\cite{Cohen:2018:DSD:3288538.3276513}, these approaches are not strictly non-blocking, because typical OS primitives such as \emph{signals} use locks internally (e.g., in Linux). ThreadScan~\cite{ThreadScan} is one such mechanism which uses signals. Once the signal is received, every thread scans its own stack and registers to report its working-set to the reclaiming thread. The reclaiming thread uses the collected information to determine unreachable objects that can be reclaimed. Forkscan~\cite{ForkScan} is an extension to ThreadScan that reduces the interruption time to only working threads. Instead of having each thread scan its own stack and registers, threads in Forkscan only write their stack boundaries and register contents, and wait for the forked scanner to begin before resuming work. The forked scanner does the lengthy scanning operation independently. DEBRA+~\cite{DEBRAPaper} and NBR~\cite{NBR} are other examples that use signals.
QSense~\cite{Balmau:2016:FRM:2935764.2935790} relies on the OS scheduler behavior. (Thus, it is hard to guarantee non-blocking behavior in general.) It mixes QSBR~\cite{epoch2} with hazard pointers~\cite{HPPaper}.

IBR~\cite{IBRPaper} and PEBR~\cite{PEBR} improve upon EBR and are not dependent on OS environments. IBR only defends against threads that are stalled indefinitely. Starving threads may still reserve
an unbounded number of blocks. Thus, IBR's authors advise to restart operations
when they fail to make progress. Although restarting is trivial for simple
data structures such as linked lists, it is more problematic for complex
data structures. In the same vein, PEBR's authors demonstrated their scheme
while assuming restarting. PEBR inherently requires restarting to retain simple
EBR-like semantics~\cite{PEBRCOMM}, a central claim of their paper. Moreover,
PEBR's API does not put any explicit bound on how many blocks each thread can
reserve. PEBR's authors only compare against EBR, and PEBR's performance
appears to be only 85-90\% of EBR's, i.e., worse than IBR's and only marginally better than hazard pointers'. Due to potentially unbounded memory usage and restarting, IBR and PEBR are not lock-free in general. Hyaline-1S~\cite{Hyaline,hyalineFULL} provides IBR-like API and has similar progress guarantees but is often faster and more efficient.

\textbf{Lock-free techniques.}
A number of lock-free approaches that bound memory usage were proposed over the years. Traditional reference counting~\cite{refcount4,refcount1,refcount3,refcount2} is fine-grained but has high overheads, especially in read-dominated workloads. Hazard pointers~\cite{HPPaper} and pass-the-buck~\cite{Herlihy:2005:NMM:1062247.1062249,Herlihy:2002:ROP:645959.676129} are also very precise as they track each object individually. However, these techniques still have
high overheads due to their extensive use of memory barriers for each pointer retrieval.
The original paper on hazard pointers~\cite{HPPaper} advertises hazard pointers as a ``wait-free'' scheme, but
pointer retrievals are only lock-free in general, unless the data structure itself takes
steps to guarantee wait-freedom as in~\cite{pedroWFQUEUE}.
Some approaches~\cite{Braginsky:2013:DAL:2486159.2486184} aim to reduce overheads
of hazard pointers, but they are only suitable for specific data structures such as linked lists.
Other techniques~\cite{Cohen:2015:AMR:2814270.2814298,Cohen:2015:EMM:2755573.2755579} do not have this limitation, but require data structures to be represented in a normalized form~\cite{10.1145/2555243.2555261}. This, however, can be burdensome.
FreeAccess~\cite{Cohen:2018:DSD:3288538.3276513} removes this burden and uses a garbage collector. However, it does not transparently handle SWAP, which coincidentally is a prerequisite for making our Crystalline-W algorithm wait-free. OrcGC~\cite{OrcGC} is another lock-free garbage collector with great performance, but it can be slower in some tests than hazard pointers.
Hazard eras~\cite{HEPaper} is a lock-free scheme with bounded memory usage, which is inspired by hazard pointers but uses epochs to expedite the algorithm. Our Crystalline-L scheme is compatible with hazard eras/pointers, provides similar progress guarantees but is often faster and more efficient.

\textbf{Wait-free techniques.}
None of the above schemes are wait-free, which makes it difficult to use them in wait-free data structures. Wait-free reclamation has recently received increasing attention. OneFile~\cite{OneFile} implements Software Transactional Memory (STM) with wait-free reclamation. CX~\cite{CX} implements a universal construct which converts the sequential specification of a data structure into a wait-free implementation. Both OneFile and CX implement their own, specialized wait-free reclamation schemes. Although these approaches enable the implementation of many wait-free data structures, customized, hand-crafted data structures can often better utilize parallelism and
achieve higher overall performance. To that end, a wait-free reclamation approach, WFE~\cite{WFE} was recently proposed. Crystalline-W is partially inspired by WFE's design, but goes beyond WFE by achieving high memory efficiency and performance over a broader range of conditions as we have shown in this paper.

\section{Conclusions}

We presented a new wait-free memory reclamation scheme, Crystalline-W, which
guarantees complete wait-freedom with bounded memory usage. Crystalline-W's uniquely distinguishing aspect is that it incorporates all desirable properties of prior schemes including wait-freedom, asynchronous reclamation, and balanced reclamation workload in the \textit{same} algorithm. The only existing wait-free scheme, WFE, lacks the two latter properties. Unsurprisingly, Crystalline-W outperforms WFE in almost all test cases.

Crystalline-W is based on a lock-free algorithm, Crystalline-L, which is also presented in the paper. Crystalline-L is an improved version of Hyaline-1S that additionally guarantees bounded memory usage even in the presence of starving threads. Furthermore, Crystalline-L introduces \textit{dynamic batches} which resolve one major inconvenience with Hyaline-1S and also improve overall performance.

Both Crystalline-L and Crystalline-W demonstrate very high throughput and great memory efficiency, unparalleled among existing schemes, which is especially evident in read-dominated workloads. Crystalline-L's performance is even superior to that of Hyaline-1S, which proves the benefits of dynamic batches and a more fine-grained API.

\section*{Availability}
Crystalline's source code is available at \url{https://github.com/rusnikola/wfsmr}.

\section*{Acknowledgments}
This work is supported in part by ONR under grants N00014-18-1-2022 and N00014-19-1-2493, and AFOSR under grant FA9550-16-1-0371.

\bibliography{lockfree}

\end{document}